\documentclass[aps,prd,preprint,titlepage,eqsecnum,nofootinbib,showpacs,showkeys,amsmath,amssymb]{revtex4-2}

\usepackage[utf8]{inputenc}
\usepackage[T1]{fontenc}
\usepackage[english]{babel}
\usepackage{amsfonts,bm,graphicx,xcolor}
\usepackage{theorem}
\usepackage{etoolbox}
\usepackage{hyperref}

\theorembodyfont{\normalfont}
\newtheorem{definition}{Definition}[section]

\theorembodyfont{\itshape}
\newtheorem{lemma}[definition]{Lemma}

\newenvironment{proof}[1][Proof]
  {\par\addvspace{\medskipamount}\noindent\textit{#1.}\ignorespaces}
  {\unskip\nobreak\hfill\ensuremath{\square}\par\addvspace{\medskipamount}}

\hypersetup{
  pdftitle={Coherent states of neutrinos in rotating matter},
  pdfauthor={Alexander Breev and Maxim Dvornikov},
  pdfkeywords={coherent states, neutrinos, Dirac equation, light-cone variables, rotating matter}
}

\begin{document}

\title{Coherent states of neutrinos in rotating matter}

\author{Alexander Breev}
\email{alexphys90@gmail.com}

\author{Maxim Dvornikov}
\email{maxim.dvornikov@gmail.com}

\affiliation{Pushkov Institute of Terrestrial Magnetism, Ionosphere and Radiowave Propagation (IZMIRAN), 108840 Moscow, Troitsk, Russia}

\begin{abstract}
We study coherent states of neutrinos electroweakly interacting with rotating matter. We start with the Dirac equation for a massive neutrino propagating in background matter with a coordinate dependent effective potential. Then, we consider the limit of massless particles and the specific matter velocity profile. It allows us to reduce the neutrino Hamiltonian to the harmonic oscillator form. We find the exact solutions of the Dirac equation and construct the creation and annihilation operators. Using these results, we build the generalized coherent states of neutrinos. We provide the interpretation of these coherent states in terms of the classical physics. In particular, we demonstrate that the coherent states constructed approach plane waves in the limit of slow matter rotation. We also define the effective neutrino mass owing to the interaction of a massless neutrino with rotating matter. Then, we consider a special case of the generalized coherent states which are called the neutrino coherent states. The situation of the degenerate parameters of the system is also studied. Finally, we apply our results for astrophysical neutrinos propagating in dense matter of a rapidly rotating neutron star (NS). We calculate the effective neutrino masses for different neutrino energies for realistic parameters of NS.
\end{abstract}

\pacs{13.15.+g, 03.65.Pm, 03.65.Ge}
\keywords{coherent states, squeezed coherent states, neutrino interaction with matter,
Dirac equation, light-cone variables, rotating matter, neutron stars}

\maketitle

\section{Introduction}\label{sec:lNTR}

Neutrinos are elusive particles which interact rather weakly with
external fields. It happens, e.g., since the cross section of the
medium energy neutrinos interaction with other particles is quite
small (see, e.g., Ref.~\cite{ForZel12}). This fact results in significant
difficulties in the neutrino detection in terrestrial conditions since
existing and future neutrino detectors and telescopes should have quite large
working volumes. Despite these difficulties, the prospects of the
low energy relic neutrinos detection were put forward recently (see
Ref.~\cite{She21} for a review)

We also mention that neutrinos are not excluded to have nontrivial
electromagnetic properties~\cite{Giu25}. However, the magnitude
of neutrino magnetic moments is likely to be quite small. It makes
the neutrino electromagnetic interaction to be subdominant compared
to the matter one. The recent proposals for the studies of neutrino
electromagnetic properties are present in Ref.~\cite{GiuTer23}.

Nevertheless, in some situations, neutrino interaction with matter
plays a crucial role in the evolution of these particles. It happens
mainly for astrophysical neutrinos. Neutrinos emitted in a core collapse
of a supernova are supposed to drive the explosion process~\cite{BetWil85}.
At the some stages of the young neutron star (NS) life, neutrinos
contribute to the star cooling~\cite{Lat91}.

Low energy neutrinos, created in NS, were suggested in Refs.~\cite{Loe90,KieWei97}
to be captured owing to the interaction with dense matter of NS. The
trapping of neutrinos, predicted in Ref.~\cite{Loe90}, is owing to
the inhomogeneous matter distribution inside NS. The kinetic approach
to neutrinos in inhomogeneous matter was developed in Ref.~\cite{Sil99}.

The coordinate dependent neutrino interaction potential can be also
implemented in a rotating matter. The neutrino quantum states electroweakly
interacting with rotating matter were found in Refs.~\cite{GrigorievSavochkinStudenikin2007,Dvornikov2014,BreevDvornikov2026}.
The studies in Refs.~\cite{GrigorievSavochkinStudenikin2007,Dvornikov2014,BreevDvornikov2026} were based
on the exact solution of the Dirac equation in the presence of the
axial vector external field. One of the difficulties in solving of
this kind of Dirac equation is the consideration of the nonzero neutrino
mass. That is why, in the problem in question, one considers either
massless neutrinos~\cite{GrigorievSavochkinStudenikin2007} or takes into account the neutrino
mass perturbatively~\cite{BalPopStu11}.

Another problem one encounters while studying neutrinos in rotating
matter is a smooth limit to the free particles case. The quantum mechanical
description of neutrinos in rotating matter was shown in Ref.~\cite{GrigorievSavochkinStudenikin2007}
to be equivalent to a charged particle motion in a magnetic field
$B_{\mathrm{eff}}\propto\omega$, where $\omega$ is the angular velocity
of matter. We have already mentioned that the neutrino interaction
with matter is quite weak. Moreover, angular velocities are not so
high even for rapidly rotating pulsars. However, if we consider the
limit of a slowly rotating matter, $\omega\to0$, the neutrino wavefunction
does not approach to a plane wave.

This difficulty can be overcome by considering coherent states as special localized
solutions of the single particle Dirac equation instead of stationary solutions. This approach allows us to
study the limit $B\to0$ and obtain a correct transition to plane-wave
solutions.

Coherent states were first
proposed in Ref.~\cite{Gla63} to describe the evolution of a many
bosons system. Coherent states of a nonrelativistic electron in a
magnetic field were built in Ref.~\cite{FelKah70}. The limit of
a zero magnetic field was considered in Ref.~\cite{Var84}.

The construction of coherent states for a relativistic electron in
a magnetic field is not so trivial. The main difficulty is that the
different components of a bispinor, which is the solution of the Dirac
equation, depend on $n$ and $n-1$, where $n$ is the principle quantum
number. Therefore, the reduction of the Hamiltonian of such an electron
to the harmonic oscillator form is not straightforward. Nevertheless,
an attempt to construct the coherent states of a relativistic electron
in a magnetic field was made in Ref.~\cite{MalkinManko1969}.

The theory of coherent states and their various generalizations has been extensively developed in quantum mechanics and quantum optics. The group theory approach to generalized coherent states is presented in Ref.~\cite{Perelomov1986}. The review of the coherent-state theory and its applications is given in Ref.~\cite{ZhangFengGilmore1990}. The modern treatment of the coherent-state theory can be found in the Ref.~\cite{Gazeau2009}. Squeezed and other nonclassical states of light are discussed in detail in Refs.~\cite{DodonovManko2003,BagchiGhoshKhare2020}.

In our recent works, we have developed various methods for constructing and generalizing coherent states. In Ref.~\cite{ShapovalovBreev2023}, coherent states of a harmonic oscillator were considered from the standpoint of the non-commutative integration method and the geometry of coadjoint orbits of a Lie group. In Ref.~\cite{BreevGitmanDerolle2024}, generalized coherent states were constructed for a massive particle under the influence of a constant force. This result extended the approach to a system with unbounded motion and a continuous energy spectrum. In Ref.~\cite{BreevGitman2025}, we proposed a method for constructing coherent states of finite-dimensional quantum systems with a given angular momentum. Finally, in Ref.~\cite{BreevGitman2026}, we established a relation between generalized coherent states and solutions of the Schr\"odinger equation on Lie groups obtained by the non-commutative integration method.

Coherent states that are exact solutions of relativistic wave equations have been studied for the Klein--Gordon and the Dirac equations in external electromagnetic fields. In Ref.~\cite{BagrovBuchbinderGitman1976}, such states were constructed using the null-plane formulation of relativistic quantum mechanics. Solutions of the Klein--Gordon and Dirac equations with squeezed coherent states of transverse motion in a plane electromagnetic wave were obtained in Ref.~\cite{GitmanBagrov2005}. Coherent and semiclassical states of relativistic particles in a magnetic-solenoid field, including solutions of the Dirac equation in $(3+1)$ dimensions in light-cone variables, were studied in Ref.~\cite{BagrovGavrilovGitmanMeira2011}.

In the present work, we generalize the method for constructing coherent states of the Dirac equation in light-cone variables to the Dirac equation for a neutrino electroweakly interacting with moving matter. This paper is organized in the following way. In Sec.~\ref{sec:lf-reduction}, we formulate the light-cone formalism for a neutrino interacting with a coordinate dependent external field. Taking that neutrinos are massive particles, we consider a particular case of background matter rotating with a constant angular velocity in Sec.~\ref{sec:lf-reduction}. Then, in Sec.~\ref{sec:oscillator-representation}, we solve the Dirac equation in the light-cone representation for massless active neutrinos. The generalized coherent states are built in Sec.~\ref{sec:coherent-states} based on the solutions found in Sec.~\ref{sec:oscillator-representation}. In Sec.~\ref{sec:classical-limits}, we interpret the obtained generalized coherent states in terms of the classical physics and consider the limit of a slowly rotating matter resulting in the free motion of a neutrino. In Sec.~\ref{sec:coherent-classification}, we select particular states, which, in fact are called the neutrino coherent states, among the general ones derived Sec.~\ref{sec:coherent-states}. The neutrino coherent states in the peculiar case of the parameters of the system in studied in Sec.~\ref{sec:singular-lambda}. In Sec.~\ref{sec:lambda-neutron-star}, we apply our results to astrophysical neutrinos emitted by a rotating NS. In particular, we compute the effective mass of a neutrino owing to the electroweak interaction with rotating matter. Finnaly, we conclude in Sec.~\ref{sec:CONCL}. Some mathematical details are provided in Appendix~\ref{app-A}. 

\section{The Dirac equation for neutrinos in light-cone variables}\label{sec:lf-reduction}

In this section, we study the Dirac equation (see Eq.~\eqref{eq2.2} below) in the light-cone variables for a neutrino electroweakly interacting with moving matter.

We consider a single massive neutrino without mixing. The neutrino is supposed to be a Dirac particle. The interaction with matter is taken in the forward scattering approximation with the following effective matter potential:
\begin{equation}
	f^\mu(x)
	=
	-\frac{G_N}{\sqrt2}
	\left[1,v^1(x^1,x^2),v^2(x^1,x^2),v^3(x^1,x^2)\right],\quad
	G_N=G_\mathrm{F} n,
	\label{eq2.1}
\end{equation}
where $(x^1,x^2,x^3)$ are local Cartesian coordinates, $(v^1,v^2,v^3)$ are the matter velocity components, $G_\mathrm{F} =1.17\times10^{-5}\,\text{GeV}^{-2}$ is the Fermi constant, and $n$ is the effective matter density. The explicit form for $n$ for different neutrino types can be found in Ref.~\cite{DvoStu02}.

Let us write down the Dirac equation for a neutrino interacting with moving matter, accounting for Eq.~\eqref{eq2.1}, in covariant form:
\begin{equation}
\left[
\gamma^\mu \hat p_\mu
+\frac12\gamma_\mu(1+\gamma^5)f^\mu
-m_\nu
\right]\Psi(x)=0,
\qquad
\hat p_\mu=\mathrm{i}\frac{\partial}{\partial x^\mu}.
\label{eq2.2}
\end{equation}
Here $m_\nu$ is the neutrino mass and $\gamma^\mu$ are the Dirac gamma matrices corresponding to the metric signature $(+---)$,
\begin{equation}
	\{\gamma^\mu,\gamma^\nu\}=2g^{\mu\nu}I_4,\quad
	\left(g^{\mu\nu}\right)=\operatorname{diag}(1,-1,-1,-1),\quad
	\mu,\nu = 0,\dots,3.
	\label{eq2.3}
\end{equation}
The operator $(1+\gamma^5)/2$ selects active neutrino states, i.e., the left-chiral subsystem.

Let us introduce light-cone variables
\begin{equation}
\begin{aligned}
&u^0=\frac{x^0-x^3}{\sqrt2},\qquad
u^1=x^1,\qquad
u^2=x^2,\qquad
u^3=\frac{x^0+x^3}{\sqrt2},\\
&x^0=\frac{u^0+u^3}{\sqrt2},\qquad
x^1=u^1,\qquad
x^2=u^2,\qquad
x^3=\frac{u^3-u^0}{\sqrt2}.
\end{aligned}
\label{eq2.4}
\end{equation}
We call the hypersurfaces $u^0=\mathrm{const}$ null hyperplanes. Shortly, $u^0$ is used as the evolution parameter. This choice of the evolution parameter was proposed by Dirac in Ref.~\cite{Dirac1949}. The metric in the variables $u^\mu$ takes the form
\begin{equation}
\begin{aligned}
&\left(\widetilde g^{\mu\nu}\right)=
\left(\widetilde g_{\mu\nu}\right) =
\begin{pmatrix}
0&0&0&1\\
0&-1&0&0\\
0&0&-1&0\\
1&0&0&0
\end{pmatrix},\quad
ds^2=2\,du^0du^3-(du^1)^2-(du^2)^2.
\end{aligned}
\label{eq2.5}
\end{equation}
The momentum operators in the light-cone variables have the form,
\begin{equation}
\widehat{\widetilde p}_\mu=\mathrm{i}\frac{\partial}{\partial u^\mu},
\qquad
\widehat{\widetilde p}_3=\frac{\hat p_0+\hat p_3}{\sqrt2},\quad
\widehat{\widetilde p}_\pm=\widehat{\widetilde p}_1\pm\mathrm{i}\widehat{\widetilde p}_2.
\label{eq2.6}
\end{equation}

The Dirac gamma matrices in light-cone variables are defined by
\begin{align}
&\widetilde\gamma^0
=\frac{\gamma^0-\gamma^3}{\sqrt2},
\qquad
\widetilde\gamma^1=\gamma^1,
\qquad
\widetilde\gamma^2=\gamma^2,
\qquad
\widetilde\gamma^3
=\frac{\gamma^0+\gamma^3}{\sqrt2}.
\label{eq2.7}\\
&\{\widetilde\gamma^\mu,\widetilde\gamma^\nu\}
=2\widetilde g^{\mu\nu}I_4,\quad 
(\widetilde\gamma^0)^2=(\widetilde\gamma^3)^2=0,
\qquad
\{\widetilde\gamma^0,\widetilde\gamma^3\}=2I_4,\quad
\widetilde\gamma^5=\gamma^5.\label{eq2.8}
\end{align}
The covariant components of the effective potential~\eqref{eq2.1} in light-cone variables take the form,
\begin{equation}
\begin{aligned}
&\widetilde f_\mu(u)
=f_\lambda(x(u))\frac{\partial x^\lambda}{\partial u^\mu}
=G_N\left(
-\frac{1+v^3}{2},
\frac{v^1}{\sqrt2},
\frac{v^2}{\sqrt2},
-\frac{1-v^3}{2}
\right).
\end{aligned}
\label{eq2.9}
\end{equation}
Thus, the Dirac Eq.~\eqref{eq2.2} in light-cone variables can be written as
\begin{equation}
\begin{aligned}
&\left[
\widetilde\gamma^\mu\widehat{\widetilde p}_\mu+U(u)-m_\nu
\right]\Psi(u)=0,\quad
U(u)=\frac12\widetilde\gamma_\mu(1+\gamma^5)\widetilde f^{\,\mu}(u).
\end{aligned}
\label{eq2.10}
\end{equation}

Let us introduce mutually orthogonal projectors,
\begin{equation}
\widehat P_{(-)}
=
\frac12\widetilde\gamma^3\widetilde\gamma^0,
\qquad
\widehat P_{(+)}
=
\frac12\widetilde\gamma^0\widetilde\gamma^3.
\label{eq2.11}
\end{equation}
They satisfy the relations,
\begin{equation}
\widehat P_{(+)}+\widehat P_{(-)}=I_4,
\qquad
\widehat P_{(\pm)}^2=\widehat P_{(\pm)},
\qquad
\widehat P_{(+)}\widehat P_{(-)}=0,
\qquad
\widehat P_{(\pm)}^\dagger=\widehat P_{(\pm)}.
\label{eq2.12}
\end{equation}
Let us decompose the bispinor,
\begin{equation}
\Psi=\Psi_{(+)}+\Psi_{(-)},
\qquad
\Psi_{(\pm)}=\widehat P_{(\pm)}\Psi.
\label{eq2.13}
\end{equation}
The projection decomposition of the Dirac equation for an electron in light-cone variables, including the separation of the independent projected component and the reconstruction of the remaining bispinor components, is developed in Ref.~\cite{BagrovGitman2014}.

\begin{lemma}\label{lem:projected-reduction}
The Dirac Eq.~\eqref{eq2.10} is equivalent to a system of two projected equations:
\begin{align}
&\left(\widehat{\widetilde p}_0+\frac12\widetilde\gamma^3\widehat P_{(+)}U\right)\Psi_{(-)}
=\frac12\left[(\widetilde\gamma^j\widehat{\widetilde p}_j+m_\nu)\widetilde\gamma^3
-\widetilde\gamma^3\widehat P_{(+)}U\right]\Psi_{(+)}, \label{eq2.14} \\
&\left[2\widehat{\widetilde p}_3+\widetilde\gamma^0\widehat P_{(-)}U\right]\Psi_{(+)}
=\left[(\widetilde\gamma^j\widehat{\widetilde p}_j+m_\nu)\widetilde\gamma^0
-\widetilde\gamma^0\widehat P_{(-)}U\right]\Psi_{(-)},\quad j=1,2.
\label{eq2.15}
\end{align}
If the inverse operator $\left[2\widehat{\widetilde p}_3
+\widetilde\gamma^0\widehat P_{(-)}U\right]^{-1}$ exists, the original Dirac Eq.~\eqref{eq2.10} is equivalent to the closed Eq.~\eqref{eq2.16} for the component $\Psi_{(-)}$,
\begin{equation}
\begin{aligned}
&\Bigg\{
2\widehat{\widetilde p}_0
+\widetilde\gamma^3\widehat P_{(+)}U\\
&-\left[
(\widetilde\gamma^j\widehat{\widetilde p}_j+m_\nu)\widetilde\gamma^3
-\widetilde\gamma^3\widehat P_{(+)}U
\right]
\left[
2\widehat{\widetilde p}_3+\widetilde\gamma^0\widehat P_{(-)}U
\right]^{-1}\\
&\times\left[
(\widetilde\gamma^j\widehat{\widetilde p}_j+m_\nu)\widetilde\gamma^0
-\widetilde\gamma^0\widehat P_{(-)}U
\right]
\Bigg\}\Psi_{(-)}=0,
\qquad j=1,2,
\end{aligned}
\label{eq2.16}
\end{equation}
whereas the component $\Psi_{(+)}$ is given by
\begin{equation}
\Psi_{(+)}=
\left[2\widehat{\widetilde p}_3+\widetilde\gamma^0\widehat P_{(-)}U\right]^{-1}
\left[(\widetilde\gamma^j\widehat{\widetilde p}_j+m_\nu)\widetilde\gamma^0
-\widetilde\gamma^0\widehat P_{(-)}U\right]\Psi_{(-)}.
\label{eq2.17}
\end{equation}
\end{lemma}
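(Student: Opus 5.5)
The plan is to multiply Eq.~\eqref{eq2.10} from the left by $\widetilde\gamma^3$ and by $\widetilde\gamma^0$, and to show that the pair of resulting equations is equivalent to \eqref{eq2.10}. The identities we need all follow from \eqref{eq2.8} and \eqref{eq2.11}: $(\widetilde\gamma^0)^2=(\widetilde\gamma^3)^2=0$, $\widetilde\gamma^3\widetilde\gamma^0=2\widehat P_{(-)}$, $\widetilde\gamma^0\widetilde\gamma^3=2\widehat P_{(+)}$. From these one gets $\widetilde\gamma^3\widehat P_{(+)}=\widetilde\gamma^3$, $\widetilde\gamma^3\widehat P_{(-)}=0$, $\widehat P_{(+)}\widetilde\gamma^3=0$, and the analogous relations for $\widetilde\gamma^0$. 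Moreover, $\widetilde\gamma^{1,2}$ anticommute with $\widetilde\gamma^{0,3}$, so they commute with $\widehat P_{(\pm)}$.

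Next I expand $\widetilde\gamma^\mu\widehat{\widetilde p}_\mu=\widetilde\gamma^0\widehat{\widetilde p}_0+\widetilde\gamma^3\widehat{\widetilde p}_3+\widetilde\gamma^j\widehat{\widetilde p}_j$ and substitute $\Psi=\Psi_{(+)}+\Psi_{(-)}$. Consider $\widetilde\gamma^3\widetilde\gamma^0\widehat{\widetilde p}_0\Psi=2\widehat P_{(-)}\widehat{\widetilde p}_0\Psi=2\widehat{\widetilde p}_0\Psi_{(-)}$. The $\widehat{\widetilde p}_3$ term is killed by $\widetilde\gamma^3$. For the transverse and mass terms I write $\widetilde\gamma^3(\widetilde\gamma^j\widehat{\widetilde p}_j-m_\nu)\Psi=-(\widetilde\gamma^j\widehat{\widetilde p}_j+m_\nu)\widetilde\gamma^3\Psi$. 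Since $\widetilde\gamma^3\Psi_{(-)}=\widetilde\gamma^3\widehat P_{(-)}\Psi=0$, only $\Psi_{(+)}$ survives. For the potential term I write $\widetilde\gamma^3 U=\widetilde\gamma^3\widehat P_{(+)}U$. Rearranging, and watching the sign convention of the paper's $\widehat{\widetilde p}_\mu=\mathrm{i}\partial_\mu$ and of how $U$ enters, gives \eqref{eq2.14}, multiplied by $2$. The same computation with $\widetilde\gamma^0$ gives \eqref{eq2.15}.

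For the converse, note that $\widetilde\gamma^3 X=0$ and $\widetilde\gamma^0 X=0$ together imply $X=\frac12(\widetilde\gamma^0\widetilde\gamma^3+\widetilde\gamma^3\widetilde\gamma^0)X=0$. Hence the two projected equations together reproduce \eqref{eq2.10}, which establishes the equivalence.

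Finally, if $[2\widehat{\widetilde p}_3+\widetilde\gamma^0\widehat P_{(-)}U]^{-1}$ exists, \eqref{eq2.15} is solved uniquely for $\Psi_{(+)}$, which is exactly \eqref{eq2.17}. Substituting this into twice \eqref{eq2.14} yields \eqref{eq2.16}. Conversely, any solution $\Psi_{(-)}$ of \eqref{eq2.16}, with $\Psi_{(+)}$ defined by \eqref{eq2.17}, satisfies both \eqref{eq2.14} and \eqref{eq2.15}.

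One consistency point needs checking: the $\Psi_{(+)}$ given by \eqref{eq2.17} must actually lie in the range of $\widehat P_{(+)}$, and $\Psi_{(-)}$ must stay in the range of $\widehat P_{(-)}$. Here the right-hand side of \eqref{eq2.17} begins with $\widetilde\gamma^0$ (after commuting $\widetilde\gamma^j$ through), and $\widehat P_{(+)}\widetilde\gamma^0=\widetilde\gamma^0$. I must also check that the operator $2\widehat{\widetilde p}_3+\widetilde\gamma^0\widehat P_{(-)}U$ preserves the $\widehat P_{(+)}$ subspace, which holds because $\widetilde\gamma^0\widehat P_{(-)}U\widehat P_{(+)}$ maps into $\operatorname{ran}\widetilde\gamma^0$. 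The analogous statement for \eqref{eq2.14} follows the same way.

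The main obstacle is bookkeeping of signs and orderings in the $U$ terms, since $U$ contains $\gamma^5$ and $\widetilde\gamma_\mu$ and does not commute with the projectors. I would treat $U$ as an opaque matrix and only insert $\widehat P_{(\pm)}$ via the identities $\widetilde\gamma^3=\widetilde\gamma^3\widehat P_{(+)}$ and $\widetilde\gamma^0=\widetilde\gamma^0\widehat P_{(-)}$. I would then split $U\Psi=U\Psi_{(+)}+U\Psi_{(-)}$ and place each piece on the side of the equation indicated in \eqref{eq2.14}–\eqref{eq2.15}.
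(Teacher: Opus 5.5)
Your proposal is correct and is essentially the paper's proof. The appendix first applies $\widehat P_{(+)}$ (resp.\ $\widehat P_{(-)}$) and then multiplies by $\widetilde\gamma^3/2$ (resp.\ $\widetilde\gamma^0$); since $\widetilde\gamma^3\widehat P_{(+)}=\widetilde\gamma^3$ and $\widetilde\gamma^0\widehat P_{(-)}=\widetilde\gamma^0$, this coincides with your direct multiplication, and the paper then substitutes Eq.~\eqref{eq2.17} into Eq.~\eqref{eq2.14} exactly as you do. Your explicit converse via $\{\widetilde\gamma^0,\widetilde\gamma^3\}=2I_4$, and your check that Eq.~\eqref{eq2.17} lands in the range of $\widehat P_{(+)}$, are steps the paper leaves implicit, so they make the argument more complete.
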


The proof of this lemma is given in Appendix~\ref{app-A}.

Since the potential in Eq.~\eqref{eq2.1} is independent of $u^3$, we have
\begin{equation}
\left[
\widehat{\widetilde p}_3,
\widetilde\gamma^\mu\widehat{\widetilde p}_\mu+U(u)-m_\nu
\right]=0.
\label{eq2.18}
\end{equation}
Hence, the solutions of the Dirac Eq.~\eqref{eq2.10} can be looked for among the eigenfunctions of the symmetry operator $\widehat{\widetilde p}_3$:
\begin{equation}
\widehat{\widetilde p}_3\Psi=\frac{\lambda}{2}\Psi,\qquad \lambda\in\mathbb R.
\label{eq2.19}
\end{equation}
If a solution is simultaneously an eigenstate of the operators $\hat p_0$ and $\hat p_3$ with eigenvalues $p_0$ and $p_3$, Eq.~\eqref{eq2.6} implies the physically useful relation
\begin{equation}
\lambda=\sqrt2\,(p_0+p_3).
\label{eq2.20}
\end{equation}
Here $p_0=E$ and $p_3$ are eigenvalues of the covariant components of the four-momentum. For the signature $(+---)$, we have $p_3=-p^3$, i.e.
$\lambda=\sqrt2\,(E-p^3)$. 

Eq.~\eqref{eq2.19} implies that
\begin{equation}
\Psi(u)
=
\psi(u^0,u^1,u^2)
\exp\left(-\frac{\mathrm{i}\lambda}{2}u^3\right).
\label{eq2.21}
\end{equation}
For solutions of the form given in Eq.~\eqref{eq2.21} with a fixed value of the quantum number $\lambda$, integration over $u^3$ gives
\begin{equation}
	\int_{\mathbb R}
	\exp\!\left[\frac{\mathrm{i}(\lambda_1-\lambda_2)}{2}u^3\right]\mathrm{d} u^3
	=4\pi\delta(\lambda_1-\lambda_2).
	\label{eq2.22}
\end{equation}

The action of the inverse operator on the functions in Eq.~\eqref{eq2.21} is readily calculated,
\begin{equation}
\begin{aligned}
\left[2\widehat{\widetilde p}_3+\widetilde\gamma^0\widehat P_{(-)}U\right]^{-1}\Psi(u) = 
\left[\lambda+\widetilde\gamma^0\widehat P_{(-)}U\right]^{-1}\Psi(u).
\end{aligned}
\label{eq2.23}
\end{equation}
This operator exists provided that
\begin{equation}
\lambda\neq0,
\qquad
\lambda-G_N(1-v^3)\neq0.
\label{eq2.24}
\end{equation}
For $v^3=v^3(u^1,u^2)$, the second condition in Eq.~\eqref{eq2.24} should hold at every point of the domain in $(u^1,u^2)$ where the solution is constructed.
Below, the reduction Eqs.~\eqref{eq2.16} and~\eqref{eq2.17} are used only when the conditions in Eq.~\eqref{eq2.24} are valid.

To rewrite the matrix Eqs.~\eqref{eq2.14}--\eqref{eq2.17} and their solutions in component form, let us fix the chiral representation of the Dirac gamma matrices consistent with our sign convention for $\gamma^5$:
\begin{equation}
\begin{aligned}
&\gamma^0=
\begin{pmatrix}0&I_2\\ I_2&0\end{pmatrix},
\qquad
\gamma^k=
\begin{pmatrix}0&-\sigma_k\\ \sigma_k&0\end{pmatrix},
\qquad k=1,2,3,\\
&\gamma^5=
\begin{pmatrix}I_2&0\\0&-I_2\end{pmatrix},
\qquad
\widehat P_{(-)}=\operatorname{diag}(0,1,1,0),
\qquad
\widehat P_{(+)}=\operatorname{diag}(1,0,0,1).
\end{aligned}
\label{eq2.25}
\end{equation}
Here $\sigma_k$ are the Pauli matrices. In this representation, the projected bispinor components have the form
\begin{equation}
\Psi_{(-)}=
\begin{pmatrix}0\\ \psi_L\\ \psi_R\\0\end{pmatrix},
\qquad
\Psi_{(+)}=
\begin{pmatrix}\phi_L\\0\\0\\\phi_R\end{pmatrix}.
\label{eq2.26}
\end{equation}
The components $\psi_L$ and $\psi_R$ are taken to be independent on the hypersurface $u^0=\mathrm{const}$, whereas the components $\phi_L$ and $\phi_R$ are determined by Eq.~\eqref{eq2.15}.

Let us introduce a scalar product on the hypersurface \texorpdfstring{$u^0=\mathrm{const}$}{u0 = const}. For two solutions $\Psi_1$ and $\Psi_2$ of the Dirac Eq.~\eqref{eq2.10}, we define the mixed current density,
\begin{equation}
j^\mu[\Psi_1,\Psi_2]=\overline\Psi_1\gamma^\mu\Psi_2,
\qquad
\partial_\mu j^\mu[\Psi_1,\Psi_2]=0.
\label{eq2.27}
\end{equation}
On the hypersurface $u^0=\mathrm{const}$, the scalar product of solutions is defined by integrating the component $\widetilde j^{\,0}=\overline\Psi_1\widetilde\gamma^0\Psi_2$ of the mixed current in Eq.~\eqref{eq2.27},
\begin{equation}
(\Psi_1,\Psi_2)_{u^0}
=
\int_{\mathbb R^3}
\overline\Psi_1\widetilde\gamma^0\Psi_2\,
\mathrm{d} u^1\mathrm{d} u^2\mathrm{d} u^3.
\label{eq2.28}
\end{equation}
The identity $\gamma^0\widetilde\gamma^0=\sqrt2\,\widehat P_{(-)}$ implies that
\begin{equation}
(\Psi_1,\Psi_2)_{u^0}
=
\sqrt2\int_{\mathbb R^3}
\Psi_{1(-)}^\dagger\Psi_{2(-)}\,
\mathrm{d} u^1\mathrm{d} u^2\mathrm{d} u^3.
\label{eq2.29}
\end{equation}
On the space of arbitrary four-component functions, Eq.~\eqref{eq2.29} is not defined, since it does not contain $\Psi_{(+)}$. Under the conditions of Lemma~\ref{lem:projected-reduction}, the function $\Psi_{(+)}$ is uniquely determined by $\Psi_{(-)}$ through Eq.~\eqref{eq2.17}, and definition in Eq.~\eqref{eq2.29} gives a positive-definite scalar product on the space of functions $\Psi_{(-)}$.

Let us consider a moving matter with the velocity field,
\begin{equation}
v^1=\Omega u^2,
\qquad
v^2=v^3=0.
\label{eq2.30}
\end{equation}
The matter velocity is physically reasonable only if $|\bm v|<1$. Hence, the linear dependence in Eq.~\eqref{eq2.30} is directly applicable in the region $|\Omega u^2|<1$. The profile in Eq.~\eqref{eq2.30} describes a linear shear flow with the constant vorticity: $(\nabla\times\bm v)_3=-\Omega$. The symmetric part of the velocity gradient does not vanish: $D_{12}=D_{21}=\Omega/2$, where $D_{ij}=(\partial_{u^i}v^j+\partial_{u^j}v^i)/2$. Thus, Eq.~\eqref{eq2.30} does not describe rigid rotation of matter as a whole.

The Dirac equation for a neutrino in the velocity profile in Eq.~\eqref{eq2.30} was previously considered in Refs.~\cite{GrigorievSavochkinStudenikin2007,Studenikin2008}, where stationary solutions and a discrete spectrum were obtained in the massless approximation. In these works, the profile $v^1=\Omega u^2$, $v^2=v^3=0$ was interpreted as a model of a rotating medium. Its difference from the velocity field of rigid rotation $\bm v=\bm\Omega\times\bm r$ and the resulting difference between the spectra were discussed in Ref.~\cite{Dvornikov2011}. The Dirac equation for a neutrino in rigidly rotating matter, including in a comoving noninertial frame, was also studied in Refs.~\cite{Dvornikov2014,BreevDvornikov2026}.

For functions of the form given in Eq.~\eqref{eq2.21}, Eq.~\eqref{eq2.15} in the chiral representation of the gamma matrices gives
\begin{equation}
\Psi_{(+)}=
\begin{pmatrix}
-\dfrac{G_N\Omega u^2+\sqrt2\,\widehat{\widetilde p}_-}{\lambda-G_N}\,\psi_L
+\dfrac{\sqrt2\,m_\nu}{\lambda-G_N}\,\psi_R\\[3mm]
0\\0\\[1mm]
\dfrac{\sqrt2}{\lambda}\left(m_\nu\psi_L+\widehat{\widetilde p}_+\psi_R\right)
\end{pmatrix},
\qquad
\lambda\neq0,\quad \lambda\neq G_N.
\label{eq2.31}
\end{equation}
Substituting Eq.~\eqref{eq2.31} for $\Psi_{(+)}$ into Eq.~\eqref{eq2.14}, we obtain a closed system for the components $\psi_L$ and $\psi_R$:
\begin{align}
&\mathrm{i}\frac{\partial\Phi}{\partial u^0}=\widehat{\mathbb H}(m_\nu)\Phi,\quad
\widehat{\mathbb H}(m_\nu) =
\begin{pmatrix}
	\widehat{\mathbb H}_{11}&\widehat{\mathbb H}_{12}\\
	\widehat{\mathbb H}_{12}^{\dagger}&\widehat{\mathbb H}_{22}
\end{pmatrix},\quad
\Phi=\begin{pmatrix}\psi_L\\ \psi_R\end{pmatrix},\quad
\label{eq2.32}
\end{align}
\begin{equation}
\begin{aligned}
\widehat{\mathbb H}_{12}
&=-\frac{m_\nu G_N}{\lambda(\lambda-G_N)}
\left(\widehat{\widetilde p}_++\frac{\lambda\Omega u^2}{\sqrt2}\right),
\\[1mm]
\widehat{\mathbb H}_{11}
&=\frac{G_N}{2}
+\frac{(G_N\Omega u^2+\sqrt2\,\widehat{\widetilde p}_+)
(G_N\Omega u^2+\sqrt2\,\widehat{\widetilde p}_-)}{2(\lambda-G_N)}
+\frac{m_\nu^2}{\lambda},
\\[1mm]
\widehat{\mathbb H}_{22}
&=\frac{\widehat{\widetilde p}_1^2+\widehat{\widetilde p}_2^2}{\lambda}
+\frac{m_\nu^2}{\lambda-G_N},
\qquad \lambda\neq0,\quad \lambda\neq G_N.
\end{aligned}
\label{eq2.33}
\end{equation}
Thus, we have reduced the Dirac Eq.~(\ref{eq2.10}) to a Schr\"odinger-type equation with the Hamiltonian in Eq.~(\ref{eq2.33}).

\section{Spectrum of massless active neutrinos}\label{sec:oscillator-representation}

In this section, we consider massless neutrinos by setting $m_\nu=0$. In this situation, we derive the reduced Hamiltonian $\hat H$ describing active neutrinos in the velocity field in Eq.~\eqref{eq2.30}. Then, we express the Hamiltonian in terms of raising and lowering operators, derive the spectrum of massless active neutrinos, and construct a linear integral of motion $\hat A(u^0)$, which is used shortly in Sec.~\ref{sec:coherent-states} to construct time-dependent coherent solutions for neutrinos.

In the representation in Eq.~\eqref{eq2.25}, the right- and left-chiral spinor components decouple for $m_\nu=0$ and can be considered independently. Let us substitute the projected bispinor components in Eq.~\eqref{eq2.26} into the projected Eq.~\eqref{eq2.15}. For eigenfunctions of the symmetry operator $\widehat{\widetilde p}_3$, Eq.~\eqref{eq2.19} gives $2\widehat{\widetilde p}_3\Psi=\lambda\Psi$. The two nonzero components of Eq.~\eqref{eq2.15} then give the system,
\begin{equation}
	\begin{aligned}
		\bigl[\lambda-G_N(1-v^3)\bigr]\phi_L
		&=
		-\left[
		\sqrt2\bigl(\partial_{u^2}+\mathrm{i}\partial_{u^1}\bigr)
		+G_N\bigl(v^1-\mathrm{i}v^2\bigr)
		\right]\psi_L,
		\\
		\lambda\phi_R
		&=
		-\sqrt2\bigl(\partial_{u^2}-\mathrm{i}\partial_{u^1}\bigr)\psi_R .
	\end{aligned}
	\label{eq3.1}
\end{equation}
The other two components of Eq.~\eqref{eq2.15} are fulfilled identically. Thus, in the massless case, the equations for the spinors $(\psi_L,\phi_L)$ and $(\psi_R,\phi_R)$ are decoupled.

Active neutrino states satisfy the condition $(1+\gamma^5)\Psi/2=\Psi$. The explicit form of $\gamma^5$ in Eq.~\eqref{eq2.25} implies that $\psi_R=\phi_R=0$ for such states. Hence, the projected components of the bispinor describing active neutrinos have the form,
\begin{equation}
	\Psi_{(-)}^{(L)}
	=
	\begin{pmatrix}
		0\\
		\psi_L\\
		0\\
		0
	\end{pmatrix},
	\qquad
	\Psi_{(+)}^{(L)}
	=
	\begin{pmatrix}
		\phi_L\\
		0\\
		0\\
		0
	\end{pmatrix}.
	\label{eq3.2}
\end{equation}
For the velocity profile in Eq.~\eqref{eq2.30}, one gets that $v^1=\Omega u^2$, $v^2=v^3=0$. The first equation in Eq.~\eqref{eq3.1} takes the form,
\begin{equation}
	\begin{aligned}
		&(\lambda-G_N)\phi_L
		=
		-\left[
		\sqrt2\bigl(\partial_{u^2}+\mathrm{i}\partial_{u^1}\bigr)
		+G_N\Omega u^2
		\right]\psi_L,
		\\
		&\Psi_{(+)}^{(L)}
		=
		-\frac{1}{\lambda-G_N}
		\begin{pmatrix}
			\sqrt2\bigl(\partial_{u^2}+\mathrm{i}\partial_{u^1}\bigr)
			+G_N\Omega u^2\\
			0\\
			0\\
			0
		\end{pmatrix}\psi_L,
		\qquad
		\lambda\neq G_N.
	\end{aligned}
	\label{eq3.3}
\end{equation}
The condition $\lambda\neq0$, required to reconstruct $\phi_R$ from the second equation in Eq.~\eqref{eq3.1}, does not arise in the subsystem of active neutrino states, since $\psi_R=\phi_R=0$. Therefore, for $G_N\neq0$, the value $\lambda=0$ is admissible, although the reduction Eqs.~\eqref{eq2.16}--\eqref{eq2.17} is no longer meaningful.

Substituting $\Psi_{(+)}^{(L)}$ from Eq.~\eqref{eq3.3} into the first projected Eq.~\eqref{eq2.14}, we obtain the reduced Schr\"odinger equation,
\begin{equation}
	\mathrm{i}\partial_{u^0}\psi_L=\hat H\psi_L,
	\label{eq3.4}
\end{equation}
where the Hamiltonian $\hat H$ is given by
\begin{equation}
\hat H=
\frac{1}{G_N-\lambda}
\left[
\partial_{u^1}^2+\partial_{u^2}^2
-\mathrm{i}\sqrt2\,G_N\Omega\,u^2\partial_{u^1}
+\frac{G_N}{2}
\left(
G_N\left[1-\Omega^2(u^2)^2\right]
-\lambda+\sqrt2\,\Omega
\right)
\right].
\label{eq3.5}
\end{equation}
The solution of the original Dirac Eq.~\eqref{eq2.10} is represented as the sum of the projected components,
\begin{equation}
\Psi(u)=\Psi_{(-)}^{(L)}+\Psi_{(+)}^{(L)}=
\exp\!\left(-\frac{\mathrm{i}\lambda}{2}u^3\right)
\begin{pmatrix}
-\dfrac{\sqrt2(\partial_{u^2}+\mathrm{i}\partial_{u^1})+G_N\Omega u^2}
{\lambda-G_N}\\[2mm]
1\\0\\0
\end{pmatrix}\psi_L.
\label{eq3.6}
\end{equation}
Taking into account Eq.~\eqref{eq2.29}, the mean value of an operator $\widehat{\mathcal O}_\perp=\widehat{\mathcal O}_\perp(u^1,u^2)$ acting in the plane $(u^1,u^2)$ is given by
\begin{equation}
\langle \widehat{\mathcal O}_\perp\rangle
\equiv
\frac{\bigl(\Psi,\widehat{\mathcal O}_\perp\Psi\bigr)_{u^0}}
{\bigl(\Psi,\Psi\bigr)_{u^0}}
=\frac{\displaystyle\int_{\mathbb R^2}
\psi_L^*\widehat{\mathcal O}_\perp\psi_L\,\mathrm{d} u^1\mathrm{d} u^2}
{\displaystyle\int_{\mathbb R^2}|\psi_L|^2\,\mathrm{d} u^1\mathrm{d} u^2}.
\label{eq3.7}
\end{equation}

Let us compare the Hamiltonian $\hat H$, defined by~\eqref{eq3.5}, with the massive-neutrino Hamiltonian $\widehat{\mathbb H}(m_\nu)$ in~\eqref{eq2.33}:
\begin{equation}
\left.\widehat{\mathbb H}(m_\nu)\right|_{m_\nu=0} =
\begin{pmatrix}
\hat H&0\\[1mm]
0&\dfrac{\widehat{\widetilde p}_1^2+\widehat{\widetilde p}_2^2}{\lambda}
\end{pmatrix},\qquad
m_\nu=0,\quad \lambda\neq0,\quad \lambda\neq G_N.
\label{eq3.8}
\end{equation}
Thus, in the massless limit, the Hamiltonian in Eq.~\eqref{eq2.33} splits into the subsystem of active neutrinos interacting with matter, described by the Hamiltonian $\hat H$, and the free subsystem corresponding to sterile neutrinos.

It is convenient to introduce contravariant momentum components in the plane $(u^1,u^2)$,
\begin{equation}
\hat p^{\,j}\equiv\widehat{\widetilde p}^{\,j}
=-\widehat{\widetilde p}_j=-\mathrm{i}\partial_{u^j},
\qquad
[u^j,\hat p^{\,k}]=\mathrm{i}\delta^{jk},
\qquad j,k=1,2.
\label{eq3.9}
\end{equation}
In terms of the operators in Eq.~\eqref{eq3.9}, the Hamiltonian in Eq.~\eqref{eq3.5} takes the form,
\begin{equation}
\hat H=
\frac{1}{G_N-\lambda}
\left[
-(\hat p^{\,1})^2-(\hat p^{\,2})^2
+\sqrt2\,G_N\Omega\,u^2\hat p^{\,1}
-\frac{G_N^2\Omega^2}{2}(u^2)^2
+\frac{G_N}{2}\left(G_N-\lambda+\sqrt2\,\Omega\right)
\right].
\label{eq3.10}
\end{equation}
For $\Omega=0$ and $G_N\neq0$, motion in the plane $(u^1,u^2)$ is free,
\begin{equation}
\begin{aligned}
&\hat H_{\Omega=0}
=\frac{(\hat p^{\,1})^2+(\hat p^{\,2})^2}{\lambda-G_N}+\frac{G_N}{2},\\
&\psi_{\bm k}(u^0,u^1,u^2)=\frac{1}{2\pi}
\exp\left\{\mathrm{i} k_1u^1+\mathrm{i} k_2u^2
-\mathrm{i}\left[\frac{k_1^2+k_2^2}{\lambda-G_N}+\frac{G_N}{2}\right]u^0\right\}.
\end{aligned}
\label{eq3.11}
\end{equation}
The limit $G_N\to0$ will be considered separately in Sec.~\ref{subsec:free-limit}.
Completing the squaring  in Eq.~\eqref{eq3.10}, we finally obtain
\begin{align}
&\hat H=
-\frac{(\hat p^{\,1}-\beta u^2)^2+(\hat p^{\,2})^2}{G_N-\lambda}
+\frac{G_N\left(G_N-\lambda+\sqrt2\,\Omega\right)}{2(G_N-\lambda)},\label{eq3.12} \\
&\beta=\frac{G_N\Omega}{\sqrt2},\quad B=\left|\beta\right|, \quad s=\operatorname{sgn}\beta.
\label{eq3.13}
\end{align}
The Hamiltonian in Eq.~\eqref{eq3.12} admits an integral of motion
\begin{align}
	\hat{X} = \exp\left(\frac{\mathrm{i}\beta}{2} u^1 u^2\right)\hat{L}_3 \exp\left(-\frac{\mathrm{i}\beta}{2} u^1 u^2\right) = \mathrm{i} \left( u^2\partial_{u^1} -u^1\partial_{u^2} \right) -\frac{\beta}{2} \left[ (u^1)^2-(u^2)^2 \right],
	\label{eq3.14}
\end{align}
which is a unitary transform of the angular momentum operator along the $z$ axis, $\hat{L}_3 = x^1 \hat{p}^2 - x^2 \hat{p}^1$.

Let us introduce the kinetic momenta,
\begin{equation}
\hat\Pi_1=\hat p^{\,1}-\beta u^2,
\qquad
\hat\Pi_2=\hat p^{\,2},
\qquad
[\hat\Pi_1,\hat\Pi_2]=-\mathrm{i}\beta.
\label{eq3.15}
\end{equation}
We define the raising and lowering operators,
\begin{equation}
\begin{aligned}
&\hat a=\frac{\hat\Pi_1-\mathrm{i} s\hat\Pi_2}{\sqrt{2B}},
\qquad
\hat a^\dagger=\frac{\hat\Pi_1+\mathrm{i} s\hat\Pi_2}{\sqrt{2B}},\\
&[\hat a,\hat a^\dagger]=1,
\qquad
\hat\Pi_1^2+\hat\Pi_2^2=B\left(2\hat a^\dagger\hat a+1\right).
\end{aligned}
\label{eq3.16}
\end{equation}
Thus, in terms of these operators, the Hamiltonian in Eq.~\eqref{eq3.12} has the diagonal form,
\begin{equation}
\begin{aligned}
&\hat H=\varepsilon_0+\omega_a\,\hat a^\dagger\hat a,\\
&\omega_a=-\frac{2B}{G_N-\lambda}
=-\frac{\sqrt2\,|G_N\Omega|}{G_N-\lambda},\\
&\varepsilon_0=
\frac{G_N\left(G_N-\lambda+\sqrt2\,\Omega\right)-2B}{2(G_N-\lambda)}.
\end{aligned}
\label{eq3.17}
\end{equation}
In the basis $\vert n,\alpha\rangle$ of eigenfunctions of the operator
$\hat N_a=\hat a^\dagger\hat a$,
$\hat N_a\vert n,\alpha\rangle=n\vert n,\alpha\rangle$,
$n=0,1,2,\dots$, where $\alpha$ labels the degeneracy of a level, the spectrum of $\hat H$ is given by
\begin{equation}
\varepsilon_n(\lambda)=\varepsilon_0+\omega_a n
=
\frac{
\dfrac{G_N}{2}
\left(G_N-\lambda+\sqrt2\,\Omega\right)
-\dfrac{|G_N\Omega|}{\sqrt2}(2n+1)
}
{G_N-\lambda},
\qquad
n=0,1,2,\dots.
\label{eq3.18}
\end{equation}

The stationary Schr\"odinger equation associated with the reduced Eq.~\eqref{eq3.4} and the relation between the eigenvalue $\varepsilon_n(\lambda)$ and the physical energy read
\begin{equation}
\hat H\lvert n,\alpha\rangle
=\varepsilon_n(\lambda)\lvert n,\alpha\rangle,
\qquad
p_0=\frac{1}{\sqrt2}
\left[\varepsilon_n(\lambda)+\frac{\lambda}{2}\right].
\label{eq3.19}
\end{equation}
The degeneracy of each level $\varepsilon_n(\lambda)$ is resolved by choosing a joint basis of the operators $\hat H$ and $\hat X$. To this end, let us introduce a second pair of creation and annihilation operators $\hat b,\hat b^\dagger$, which commute with the first pair in Eq.~\eqref{eq3.16}:
\begin{equation}
\begin{aligned}
&\hat b=\frac{\hat K_1+\mathrm{i} s\hat K_2}{\sqrt{2B}},
\qquad
\hat b^\dagger=\frac{\hat K_1-\mathrm{i} s\hat K_2}{\sqrt{2B}},\qquad
\hat K_1=\hat p^{\,1},
\qquad
\hat K_2=\hat p^{\,2}-\beta u^1,\\
&[\hat b,\hat b^\dagger]=1,
\qquad
[\hat a,\hat b]=[\hat a,\hat b^\dagger]=[\hat a^\dagger,\hat b]=[\hat a^\dagger,\hat b^\dagger]=0.
\end{aligned}
\label{eq3.20}
\end{equation}
The coordinates $(u^1,u^2)$ and the momenta $(\hat{p}^1,\hat{p}^2)$ can then be expressed in terms of these operators as follows:
\begin{equation}
\begin{aligned}
&u^1=\frac{\mathrm{i}}{\sqrt{2B}}
\left(\hat a-\hat a^\dagger+\hat b-\hat b^\dagger\right),\quad
u^2=\frac{s}{\sqrt{2B}}
\left(\hat b+\hat b^\dagger-\hat a-\hat a^\dagger\right),\\
&\hat p^{\,1}=\sqrt{\frac{B}{2}}\left(\hat b+\hat b^\dagger\right),
\qquad
\hat p^{\,2}=\mathrm{i} s\sqrt{\frac{B}{2}}\left(\hat a-\hat a^\dagger\right),\quad
\hat{X}=s\left(\hat a^\dagger\hat a-\hat b^\dagger\hat b\right).
\end{aligned}
\label{eq3.21}
\end{equation}
Let us define the vacuum state, $\hat a\lvert 0,0\rangle=\hat b\lvert 0,0\rangle=0$, and introduce the basis
\begin{equation}
\lvert n,\ell\rangle=\frac{(\hat a^\dagger)^n(\hat b^\dagger)^\ell}{\sqrt{n!\,\ell!}}\lvert 0,0\rangle,\quad
\langle 0,0\vert 0,0\rangle = 1,
\quad
\hat X\lvert n,\ell\rangle=s(n-\ell)\lvert n,\ell\rangle,
\qquad n,\ell=0,1,2,\ldots.
\end{equation}
In the coordinate representation, this vacuum state is given by
\begin{equation}
\varphi_{00}(u^1,u^2)
=\sqrt{\frac{B}{2\pi}}
\exp\left[-\frac{B}{4}\left((u^1)^2+(u^2)^2\right)
+\frac{\mathrm{i} sB}{2}u^1u^2\right],
\qquad
\hat a\varphi_{00}=\hat b\varphi_{00}=0.
\label{eq3.23}
\end{equation}
Hence, the eigenfunctions of the commuting operators $\hat H$ and $\hat X$ in the coordinate representation, corresponding to the states $\lvert n,\ell\rangle$, have the form
\begin{equation}
\begin{aligned}
&\varphi_{n\ell}(u^1,u^2)
=\frac{(\hat a^\dagger)^n(\hat b^\dagger)^\ell}{\sqrt{n!\,\ell!}}\,
\varphi_{00}(u^1,u^2),\\
&\hat H\varphi_{n\ell}=\varepsilon_n(\lambda)\varphi_{n\ell},
\qquad
\hat X\varphi_{n\ell}=s(n-\ell)\varphi_{n\ell}.
\end{aligned}
\label{eq3.24}
\end{equation}
The eigenfunctions in Eq.~\eqref{eq3.24} are expressed in terms of two-dimensional complex Hermite polynomials,
\begin{equation}
\begin{aligned}
&\varphi_{n\ell}(u^1,u^2)
=\frac{\mathrm{i}^{\,n+\ell}}{\sqrt{n!\,\ell!}}\,
H_{n\ell}(w,\bar w)\,\varphi_{00}(u^1,u^2),\\
&H_{n\ell}(w,\bar w)
=\sum_{r=0}^{\min(n,\ell)}(-1)^r r!
\binom{n}{r}\binom{\ell}{r}
 w^{\,n-r}\bar w^{\,\ell-r},\\
&w=\sqrt{\frac{B}{2}}\left(u^1+\mathrm{i} s u^2\right),
\qquad
\bar w=\sqrt{\frac{B}{2}}\left(u^1-\mathrm{i} s u^2\right).
\end{aligned}
\label{eq3.25}
\end{equation}
The following orthogonality and completeness conditions hold, respectively:
\begin{equation}
\begin{aligned}
&\int_{\mathbb R^2}\varphi_{n\ell}^*(\bm u_\perp)
\varphi_{n'\ell'}(\bm u_\perp)\,\mathrm{d}^2u_\perp
=\delta_{nn'}\delta_{\ell\ell'},\\
&\sum_{n,\ell=0}^{\infty}
\varphi_{n\ell}(\bm u_\perp)
\varphi_{n\ell}^*(\bm u'_\perp)
=\delta^{(2)}(\bm u_\perp-\bm u'_\perp).
\end{aligned}
\label{eq3.26}
\end{equation}
where $\bm u_\perp=(u^1,u^2)$ and $\bm u'_\perp=(u'^1,u'^2)$.

Let us return to the Schr\"odinger-type Eq.~\eqref{eq3.4} for the function $\psi_L(u^0,u^1,u^2)$ and construct its linear integral of motion. The construction of time-dependent linear integrals of motion and the corresponding coherent states for a charged particle in a time-dependent magnetic field was developed in Ref.~\cite{MalkinMankoTrifonov1970}.

We look for an integral of motion in the form,
\begin{equation}
\hat A(u^0)=\mu(u^0)\hat a+\nu(u^0)\hat a^\dagger.
\label{eq3.27}
\end{equation}
Using the relations,
\begin{equation}
[\hat H,\hat a]=-\omega_a \hat a,
\qquad
[\hat H,\hat a^\dagger]=\omega_a \hat a^\dagger,\end{equation}
we obtain that
\begin{equation}
\begin{aligned}
&\dot\mu-\mathrm{i}\omega_a\mu=0,
\qquad
\dot\nu+\mathrm{i}\omega_a\nu=0,\\
&\mu(u^0)=\mu_0\mathrm{e}^{\mathrm{i}\omega_a u^0},
\qquad
\nu(u^0)=\nu_0\mathrm{e}^{-\mathrm{i}\omega_a u^0}.
\end{aligned}
\label{eq3.29}
\end{equation}
Here $\mu_0,\nu_0\in\mathbb C$. Let us require that the operators $\hat A(u^0)$ and $\hat A^\dagger(u^0)$ satisfy the standard oscillator commutation relation. Then
\begin{equation}
[\hat A(u^0),\hat A^\dagger(u^0)]=1,
\quad\Longleftrightarrow\quad
|\mu_0|^2-|\nu_0|^2=1.
\label{eq3.30}
\end{equation}
All solutions of the condition in Eq.~\eqref{eq3.30} can be parametrized by one complex parameter $\rho_0$ in the unit disk and a common phase $\chi$:
\begin{equation}
\mu_0=\frac{\mathrm{e}^{\mathrm{i}\chi}}{\sqrt{1-|\rho_0|^2}},
\qquad
\nu_0=\frac{\rho_0\mathrm{e}^{\mathrm{i}\chi}}{\sqrt{1-|\rho_0|^2}},
\qquad
|\rho_0|<1.
\label{eq3.31}
\end{equation}
The common phase $\mathrm{e}^{\mathrm{i}\chi}$ multiplies the operator $\hat A$ by a constant phase and can be absorbed into its complex eigenvalues. We therefore set $\chi=0$ without loss of generality. That is, $\mu_0=\cosh r$, $\nu_0=\mathrm{e}^{\mathrm{i}\vartheta}\sinh r$; then $\rho_0=\mathrm{e}^{\mathrm{i}\vartheta}\tanh r$, where $r\geq0$ and $\vartheta\in\mathbb R$. As a result, the time-dependent linear integral of motion is given by
\begin{equation}
\hat A(u^0)=
\frac{
\mathrm{e}^{\mathrm{i}\omega_a u^0}\hat a+
\rho_0\mathrm{e}^{-\mathrm{i}\omega_a u^0}\hat a^\dagger
}
{\sqrt{1-|\rho_0|^2}},\quad [\hat A(u^0),\hat A^\dagger(u^0)]=1,
\quad |\rho_0|<1.
\label{eq3.32}
\end{equation}

\section{Generalized coherent states}\label{sec:coherent-states}

In Sec.~\ref{subsec:joint-coherent}, we construct generalized coherent states $\psi_{z_1,z_2}$ of massless active neutrinos using the commuting operators $\hat A(u^0)$ and $\hat b$. In Sec.~\ref{subsec:complete-dirac-set}, we construct a complete set of solutions $\Psi_{z_1,z_2,\lambda}$ of the original Dirac Eq.~\eqref{eq2.10}. Then, in Sec.~\ref{subsec:mean-values}, we calculate the mean coordinates and momenta in the plane $(u^1,u^2)$ and obtain a closed system of Ehrenfest equations for these mean values.

The algebraic construction of generalized coherent states used below originates from the construction of coherent states of a charged particle in a uniform magnetic field~\cite{MalkinManko1969}. Coherent solutions of the Dirac equation for an electron on a null hyperplane were studied in Ref.~\cite{BagrovBuchbinderGitman1976}. Solutions with squeezed coherent states of transverse electron motion in light-cone variables were constructed in Refs.~\cite{GitmanBagrov2005,BagrovGavrilovGitmanMeira2011}. The general group formulation of generalized coherent states is presented in Ref.~\cite{Perelomov1986}. In this section, we apply this approach to the reduced Eq.~\eqref{eq3.4} for an active neutrino in moving matter.

\subsection{Coherent states of the operators \texorpdfstring{$\hat A(u^0)$ and $\hat b$}{A(u0) and b}}\label{subsec:joint-coherent}

Let us introduce the following definition:

\begin{definition}
\setlength{\emergencystretch}{2em}
A \emph{joint coherent state} of the operators $\hat A(u^0)$ and $\hat b$ is a solution $\psi_{z_1,z_2}(u^0,u^1,u^2)$ of Eq.~\eqref{eq3.4}, normalized in the Hilbert space $L^2(\mathbb R^2)$, that satisfies the system in Eq.~\eqref{eq4.1}
\begin{equation}
\begin{aligned}
&\hat A(u^0)\psi_{z_1,z_2}(u^0,u^1,u^2)
=z_1\psi_{z_1,z_2}(u^0,u^1,u^2),\\
&\hat b\psi_{z_1,z_2}(u^0,u^1,u^2)
=z_2\psi_{z_1,z_2}(u^0,u^1,u^2),
\qquad z_1,z_2\in\mathbb C.
\end{aligned}
\label{eq4.1}
\end{equation}
\end{definition}

The dependence on the fixed parameter $\lambda$ is not indicated in the notation $\psi_{z_1,z_2}$. In intermediate formulas, the arguments $u^0,u^1,u^2$ of this function are also omitted when this causes no ambiguity.

In the coordinate representation,
\begin{equation}
\begin{aligned}
&\hat a=\frac{-\mathrm{i}\partial_{u^1}-s\partial_{u^2}-\beta u^2}{\sqrt{2B}},
\qquad
\hat a^\dagger=\frac{-\mathrm{i}\partial_{u^1}+s\partial_{u^2}-\beta u^2}{\sqrt{2B}},\\
&\hat b=\frac{-\mathrm{i}\partial_{u^1}+s\partial_{u^2}-\mathrm{i} Bu^1}{\sqrt{2B}}.
\end{aligned}
\label{eq4.2}
\end{equation}
The parameter $\rho_0$, $|\rho_0|<1$, is defined by Eq.~\eqref{eq3.31}. Setting
\begin{equation}
\psi_{z_1,z_2}(u^0,u^1,u^2)=C_{z_1,z_2}(u^0)
\exp F(u^0,u^1,u^2),
\label{eq4.3}
\end{equation}
we obtain from Eq.~\eqref{eq4.1}
\begin{equation}
\begin{aligned}
&\partial_{u^1}F
=-\frac{B}{2}\left(1-\rho_0\mathrm{e}^{-2\mathrm{i}\omega_a u^0}\right)u^1
+\frac{\mathrm{i} sB}{2}\left(1+\rho_0\mathrm{e}^{-2\mathrm{i}\omega_a u^0}\right)u^2\\
&\hspace{2.2em}
+\frac{\mathrm{i}\sqrt{2B}}{2}\left[
\sqrt{1-|\rho_0|^2}\,z_1\mathrm{e}^{-\mathrm{i}\omega_a u^0}
+\left(1-\rho_0\mathrm{e}^{-2\mathrm{i}\omega_a u^0}\right)z_2\right],\\
&\partial_{u^2}F
=\frac{\mathrm{i} sB}{2}\left(1+\rho_0\mathrm{e}^{-2\mathrm{i}\omega_a u^0}\right)u^1
-\frac{B}{2}\left(1+\rho_0\mathrm{e}^{-2\mathrm{i}\omega_a u^0}\right)u^2\\
&\hspace{2.2em}
+\frac{s\sqrt{2B}}{2}\left[
\left(1+\rho_0\mathrm{e}^{-2\mathrm{i}\omega_a u^0}\right)z_2
-\sqrt{1-|\rho_0|^2}\,z_1\mathrm{e}^{-\mathrm{i}\omega_a u^0}\right].
\end{aligned}
\label{eq4.4}
\end{equation}
Integrating the system in Eq.~\eqref{eq4.4} with respect to $u^1$ and $u^2$, we find
\begin{equation}
\begin{aligned}
&F=-\frac{B}{4}\left[
\left(1-\rho_0\mathrm{e}^{-2\mathrm{i}\omega_a u^0}\right)(u^1)^2
+\left(1+\rho_0\mathrm{e}^{-2\mathrm{i}\omega_a u^0}\right)(u^2)^2\right]\\
&\quad+\frac{\mathrm{i} sB}{2}\left(1+\rho_0\mathrm{e}^{-2\mathrm{i}\omega_a u^0}\right)u^1u^2\\
&\quad+\frac{\mathrm{i}\sqrt{2B}}{2}\left[
\sqrt{1-|\rho_0|^2}\,z_1\mathrm{e}^{-\mathrm{i}\omega_a u^0}
+\left(1-\rho_0\mathrm{e}^{-2\mathrm{i}\omega_a u^0}\right)z_2\right]u^1\\
&\quad+\frac{s\sqrt{2B}}{2}\left[
\left(1+\rho_0\mathrm{e}^{-2\mathrm{i}\omega_a u^0}\right)z_2
-\sqrt{1-|\rho_0|^2}\,z_1\mathrm{e}^{-\mathrm{i}\omega_a u^0}\right]u^2.
\end{aligned}
\label{eq4.5}
\end{equation}
Hence,
\begin{equation}
\psi_{z_1,z_2}(u^0,u^1,u^2)
=C_{z_1,z_2}(u^0)\exp F(u^0,u^1,u^2).
\label{eq4.6}
\end{equation}

Substituting Eq.~\eqref{eq4.6} into the Schr\"odinger Eq.~\eqref{eq3.4}, we obtain
\begin{equation}
\mathrm{i}\frac{\dot C_{z_1,z_2}}{C_{z_1,z_2}}
=\varepsilon_0
+\omega_a\sqrt{1-|\rho_0|^2}\,z_1z_2\mathrm{e}^{-\mathrm{i}\omega_a u^0}
-\omega_a\rho_0z_2^2\mathrm{e}^{-2\mathrm{i}\omega_a u^0}.
\label{eq4.7}
\end{equation}
The integration gives
\begin{equation}
C_{z_1,z_2}(u^0)=C_{z_1,z_2}(0)
\exp\left\{
-\mathrm{i}\varepsilon_0u^0
+\sqrt{1-|\rho_0|^2}\,z_1z_2\left(\mathrm{e}^{-\mathrm{i}\omega_a u^0}-1\right)
-\frac{\rho_0z_2^2}{2}\left(\mathrm{e}^{-2\mathrm{i}\omega_a u^0}-1\right)
\right\}.
\label{eq4.8}
\end{equation}
Normalization determines the modulus of the constant $C_{z_1,z_2}(0)$ rather than its overall phase. Let us require
\begin{equation}
\int_{\mathbb R^2}|\psi_{z_1,z_2}(0,u^1,u^2)|^2\,\mathrm{d} u^1\mathrm{d} u^2=1.
\label{eq4.9}
\end{equation}
We choose the remaining phase so that the normalization constant has the compact form,
\begin{equation}
\begin{aligned}
&C_{z_1,z_2}(0)=\sqrt{\frac{B\sqrt{1-|\rho_0|^2}}{2\pi}}\\
&\quad\times\exp\left[
-\frac{|z_1|^2+|z_2|^2}{2}
+\frac{\rho_0^*z_1^2}{2}
+\sqrt{1-|\rho_0|^2}\,z_1z_2
-\frac{\rho_0z_2^2}{2}
\right].
\end{aligned}
\label{eq4.10}
\end{equation}
The absolute value of Eq.~\eqref{eq4.10} is determined solely by the condition in Eq.~\eqref{eq4.9}. The chosen phase affects neither the absolute values of the scalar products nor the mean values.

Taking into account Eq.~\eqref{eq3.23} for the vacuum state $\varphi_{00}$, together with Eqs.~\eqref{eq4.8}, \eqref{eq4.10}, and~\eqref{eq4.5}, we finally obtain that
\begin{equation}
\begin{aligned}
&\psi_{z_1,z_2}(u^0,u^1,u^2)
=(1-|\rho_0|^2)^{1/4}\,\varphi_{00}(u^1,u^2)\\
&\quad\times\exp\Bigg\{
-\frac{|z_1|^2+|z_2|^2}{2}
+\frac{\rho_0^*z_1^2}{2}
-\mathrm{i}\varepsilon_0u^0\\
&\qquad
+\frac{\rho_0}{2}\mathrm{e}^{-2\mathrm{i}\omega_a u^0}(w-\mathrm{i} z_2)^2
+\mathrm{i}\sqrt{1-|\rho_0|^2}\,z_1\mathrm{e}^{-\mathrm{i}\omega_a u^0}(w-\mathrm{i} z_2)
 +\mathrm{i} z_2\bar w
\Bigg\},\\
&\qquad
w=\sqrt{\frac{B}{2}}\,(u^1+\mathrm{i} s u^2),
\qquad
\bar w=\sqrt{\frac{B}{2}}\,(u^1-\mathrm{i} s u^2).
\end{aligned}
\label{eq4.11}
\end{equation}
Here $z_1$ and $z_2$ are the eigenvalues of the operators $\hat A(u^0)$ and $\hat b$. The complex parameter $\rho_0$, having $|\rho_0|<1$, specifies the linear integral of motion in Eq.~\eqref{eq3.32}. The quantities $\varepsilon_0$ and $\omega_a$ are defined by Eq.~\eqref{eq3.17}. The explicit dependence of $\psi_{z_1,z_2}$ on $u^1,u^2$, before it is expressed in terms of $w,\bar w$, is given in Eq.~\eqref{eq4.5}.

\subsection{A complete set of solutions of the original Dirac equation}\label{subsec:complete-dirac-set}

For a fixed $\rho_0$, $|\rho_0|<1$, and each $\lambda\neq G_N$, let us substitute the function in Eq.~\eqref{eq4.11} into Eq.~\eqref{eq3.6}. The full solution of the original Dirac Eq.~\eqref{eq2.10} takes the form,
\begin{equation}
\Psi_{z_1,z_2,\lambda}(u)
=\frac{\mathrm{e}^{-\mathrm{i}\lambda u^3/2}}{\sqrt{4\pi\sqrt2}}
\begin{pmatrix}
-\dfrac{\sqrt2(\partial_{u^2}+\mathrm{i}\partial_{u^1})+G_N\Omega u^2}
{\lambda-G_N}\\[2mm]
1\\0\\0
\end{pmatrix}
\psi_{z_1,z_2}(u^0,u^1,u^2),
\qquad z_1,z_2\in\mathbb C.
\label{eq4.12}
\end{equation}

The following lemma holds:

\begin{lemma}\label{lem:coherent-dirac-completeness}
For a fixed $\rho_0$, $|\rho_0|<1$, the functions $\Psi_{z_1,z_2,\lambda}$, where $z_1,z_2\in\mathbb C$ and $\lambda\in\mathbb R\setminus\{G_N\}$, are solutions of the original Dirac Eq.~\eqref{eq2.10}, normalized with respect to the longitudinal spectral parameter,
\begin{equation}
\left(\Psi_{z_1,z_2,\lambda},\Psi_{z_1,z_2,\lambda'}\right)_{u^0}
=\delta(\lambda-\lambda').
\label{eq4.13}
\end{equation}
At fixed $u^0$ and $\lambda$, the coherent states are not orthogonal. The absolute value of the scalar product of their functions $\psi_{z_1,z_2}$ in $L^2(\mathbb R^2)$ is
\begin{equation}
\left|\int_{\mathbb R^2}
\psi_{z_1,z_2}^*(u^0,u^1,u^2)
\psi_{z'_1,z'_2}(u^0,u^1,u^2)\,\mathrm{d} u^1\mathrm{d} u^2\right|
=\exp\left[-\frac{|z_1-z'_1|^2+|z_2-z'_2|^2}{2}\right].
\label{eq4.14}
\end{equation}
The resolution of the identity has the form,
\begin{equation}
\int_{\mathbb R\setminus\{G_N\}}\mathrm{d}\lambda
\int_{\mathbb C}\frac{\mathrm{d}^2z_1}{\pi}
\int_{\mathbb C}\frac{\mathrm{d}^2z_2}{\pi}\,
\lvert \Psi_{z_1,z_2,\lambda}\rangle
\langle \Psi_{z_1,z_2,\lambda}\rvert
=I_{\mathcal H_L}.
\label{eq4.15}
\end{equation}
Here $\mathcal H_L$ is the Hilbert space of the active neutrino states under consideration. On the null hypersurface $u^0=\mathrm{const}$, it is realized as the space of independent projected components $\Psi_{(-)}$ with the scalar product in Eq.~\eqref{eq2.29}.
\end{lemma}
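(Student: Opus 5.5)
The plan has three parts, matching the three claims of the lemma.

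\emph{Solutions and normalization.} By construction $\psi_{z_1,z_2}$ in Eq.~\eqref{eq4.11} solves the reduced Eq.~\eqref{eq3.4}: Eqs.~\eqref{eq4.4}--\eqref{eq4.8} were derived precisely by imposing Eq.~\eqref{eq4.1} together with Eq.~\eqref{eq3.4}. Since $\lambda\neq G_N$, Lemma~\ref{lem:projected-reduction} applies in the left-chiral sector, so reconstructing $\Psi_{(+)}^{(L)}$ by Eq.~\eqref{eq3.3} and summing as in Eq.~\eqref{eq3.6} gives a solution of Eq.~\eqref{eq2.10}. Multiplying by the constant $1/\sqrt{4\pi\sqrt2}$ does not affect this. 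For the normalization I will use Eq.~\eqref{eq2.29}, where only $\Psi_{(-)}$, i.e.\ $\psi_L$, enters. The $u^3$ integral is handled by Eq.~\eqref{eq2.22}, which gives $4\pi\delta(\lambda-\lambda')$. The prefactor $\sqrt2/(4\pi\sqrt2)$ then cancels it exactly, leaving $\delta(\lambda-\lambda')\,\|\psi_{z_1,z_2}(u^0)\|^2$. The remaining norm is $1$ for all $u^0$: it equals $1$ at $u^0=0$ by Eq.~\eqref{eq4.9}, and it is preserved because $\hat H$ in Eq.~\eqref{eq3.12} is self-adjoint for real $\lambda$, so the evolution is unitary.

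\emph{Overlaps.} I will rewrite $\psi_{z_1,z_2}$ as a displaced squeezed state. Let $S$ be the unitary Bogoliubov (squeeze) operator satisfying $S\hat a S^\dagger=\hat A(u^0)$; it exists because Eq.~\eqref{eq3.32} preserves the commutator. Since $\hat b$ commutes with $\hat a$, $\hat a^\dagger$, $S$ also commutes with $\hat b$. Then $S^\dagger\psi_{z_1,z_2}$ is a simultaneous eigenvector of $\hat a$ and $\hat b$ with eigenvalues $z_1,z_2$. Such an eigenvector is unique up to a phase: the joint eigenspace is one-dimensional because the Fock basis $\lvert n,\ell\rangle$ is complete, Eq.~\eqref{eq3.26}. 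Hence
\[
\psi_{z_1,z_2}=\mathrm{e}^{\mathrm{i}\theta}\,S\,D_a(z_1)D_b(z_2)\lvert 0,0\rangle ,
\]
where $D_a$, $D_b$ are the usual displacement operators. Unitarity of $S$ reduces the overlap to that of standard two-mode Glauber states. Its modulus is $\exp(-|z_1-z_1'|^2/2)\exp(-|z_2-z_2'|^2/2)$, which is Eq.~\eqref{eq4.14}. An alternative check is the direct Gaussian integral of Eq.~\eqref{eq4.5}, but the algebraic route avoids that bookkeeping.

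\emph{Resolution of identity.} The same representation gives it. For fixed $u^0$,
\[
\int\frac{\mathrm{d}^2z_1\,\mathrm{d}^2z_2}{\pi^2}\,\lvert\psi_{z_1,z_2}\rangle\langle\psi_{z_1,z_2}\rvert
= S\left[\int\frac{\mathrm{d}^2z_1\,\mathrm{d}^2z_2}{\pi^2}\,\lvert z_1,z_2\rangle\langle z_1,z_2\rvert\right]S^\dagger
= I_{L^2(\mathbb R^2)},
\]
using the Glauber completeness relation for each mode, which rests on the basis of Eq.~\eqref{eq3.26}. In the longitudinal variable, the functions $\mathrm{e}^{-\mathrm{i}\lambda u^3/2}/\sqrt{4\pi}$ form a complete set: $\int\mathrm{d}\lambda$ of their products reproduces $\delta(u^3-u'^3)$, which is Eq.~\eqref{eq2.22} read in the dual variable. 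Removing the single point $\lambda=G_N$ is a measure-zero set and changes nothing. Combining the two factors, the $\Psi_{(-)}$ projections of the $\Psi_{z_1,z_2,\lambda}$ resolve the identity on $\mathcal H_L$ with the scalar product of Eq.~\eqref{eq2.29}. The factor $\sqrt2$ in that scalar product is absorbed by the $1/\sqrt{\sqrt2}$ in Eq.~\eqref{eq4.12}.

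\emph{Main obstacle.} The subtle step is justifying the identity with the squeezed--displaced representation. Two points need care. First, the joint-eigenvector uniqueness: I will argue it via the unique Gaussian solution of the first-order system Eq.~\eqref{eq4.4}, whose real quadratic part is negative definite for $|\rho_0|<1$. Second, one must check that the overall $u^0$-dependent phases of Eqs.~\eqref{eq4.8}--\eqref{eq4.10} do not spoil the moduli; they only contribute $|\mathrm{e}^{\mathrm{i}\theta}|=1$, which cancels in the projectors.
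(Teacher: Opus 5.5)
Your proposal is correct and follows essentially the same route as the paper's proof. A Bogoliubov unitary implementing $\hat a\mapsto\hat A(u^0)$, and commuting with $\hat b$, carries the standard two-mode Glauber states onto $\psi_{z_1,z_2}$; this gives Eq.~\eqref{eq4.14} and the transverse resolution of the identity, and the longitudinal relation Eq.~\eqref{eq2.22} with the factor $(4\pi\sqrt2)^{-1/2}$ then yields Eqs.~\eqref{eq4.13} and~\eqref{eq4.15}, while your extra points (uniqueness of the joint eigenvector via the Fock expansion, and $u^0$-independence of the norm from self-adjointness of $\hat H$) just make explicit what the paper takes for granted.
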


\begin{proof}
At each fixed $u^0$, the operator $\hat A(u^0)$  in Eq.~\eqref{eq3.32} satisfies $[\hat A(u^0),\hat A^\dagger(u^0)]=1$, the operator $\hat b$ satisfies~\eqref{eq3.20}, and all other commutators vanish. Hence, $\psi_{z_1,z_2}(u^0,\cdot,\cdot)$ is a normalized joint coherent state of two independent sets of creation and annihilation operators.

At fixed \(u^0\), the transformation in Eq.~\eqref{eq3.32} is a canonical Bogoliubov transformation. If \(|z_1,z_2\rangle\) are the standard joint coherent states of the operators \(\hat a\) and \(\hat b\), then
$$
|z_1,z_2\rangle = 
\mathrm{e}^{-(|z_1|^2+|z_2|^2)/2}
\sum_{n,\ell=0}^{\infty}
\frac{z_1^nz_2^\ell}{\sqrt{n!\ell!}}
|n,\ell\rangle.
$$
Taking into account the integral
$$
\int_{\mathbb C}\frac{\mathrm d^2z}{\pi}
\mathrm{e}^{-|z|^2}z^n(z^*)^m = 
n!\delta_{nm}
$$
and the completeness of the basis in Eq.~\eqref{eq3.26}, we obtain the standard resolution of the identity. Since the states \(\psi_{z_1,z_2}(u^0)\) are obtained from these states by the same unitary transformation, at fixed $\lambda$ we have
\begin{equation}
\int_{\mathbb C}\frac{\mathrm d^2z_1}{\pi}
\int_{\mathbb C}\frac{\mathrm d^2z_2}{\pi}
\lvert\psi_{z_1,z_2}(u^0)\rangle
\langle\psi_{z_1,z_2}(u^0)\rvert
=I_{L^2(\mathbb R^2)}.
\label{eq4.16}
\end{equation}
The standard scalar product of two coherent states implies~\eqref{eq4.14}; in particular, distinct coherent states are generally not orthogonal.

According to Eq.~\eqref{eq2.29}, the scalar product of the full bispinors depends only on the component $\Psi_{(-)}^{(L)}$, which contains the function $\psi_{z_1,z_2}$. Therefore, the normalization of $\psi_{z_1,z_2}$ in $L^2(\mathbb R^2)$ and the longitudinal Fourier relation in Eq.~\eqref{eq2.22}, together with the factor $(4\pi\sqrt2)^{-1/2}$ in Eq.~\eqref{eq4.12}, give Eq.~\eqref{eq4.13}. Combining the longitudinal Fourier transform with the transverse resolution of the identity in Eq.~\eqref{eq4.16}, it proves Eq.~\eqref{eq4.15}.
\end{proof}

\subsection{Mean coordinates and momenta in the plane \texorpdfstring{$(u^1,u^2)$}{(u1,u2)}}\label{subsec:mean-values}

From Eq.~\eqref{eq3.32}, we obtain the inverse transformation,
\begin{equation}
\hat a=
\frac{\mathrm{e}^{-\mathrm{i}\omega_a u^0}}
{\sqrt{1-|\rho_0|^2}}
\left(\hat A-\rho_0\hat A^\dagger\right).
\label{eq4.17}
\end{equation}
Hence, for the state $\psi_{z_1,z_2}(u^0,\cdot,\cdot)$,
\begin{equation}
\langle \hat a\rangle=
\frac{z_1-\rho_0z_1^*}{\sqrt{1-|\rho_0|^2}}\,
\mathrm{e}^{-\mathrm{i}\omega_a u^0},
\qquad
\langle \hat b\rangle=z_2.
\label{eq4.18}
\end{equation}
Using the inverse Eq.~\eqref{eq3.21}, we obtain
\begin{equation}
\begin{aligned}
&\langle u^1\rangle+\mathrm{i} s\langle u^2\rangle
=\mathrm{i}\sqrt{\frac{2}{B}}\left[
 z_2-
\frac{z_1^*-\rho_0^*z_1}{\sqrt{1-|\rho_0|^2}}\,
\mathrm{e}^{\mathrm{i}\omega_a u^0}
\right],\\
&\langle \hat p^{\,1}\rangle=\sqrt{2B}\,\operatorname{Re} z_2,\qquad
\langle \hat p^{\,2}\rangle=-s\sqrt{\frac{2B}{1-|\rho_0|^2}}\,
\operatorname{Im}\!\left[(z_1-\rho_0z_1^*)\mathrm{e}^{-\mathrm{i}\omega_a u^0}\right].
\end{aligned}
\label{eq4.19}
\end{equation}
The real and imaginary parts of the first equality  in Eq.~\eqref{eq4.19} give, respectively, $\langle u^1\rangle$ and
$s\langle u^2\rangle$.
For the coordinate $u^2$, on which the velocity field in Eq.~\eqref{eq2.30} depends, the variance is
\begin{equation}
(\Delta u^2)^2
=
\frac{1-\operatorname{Re}\!\left(\rho_0\mathrm{e}^{-2\mathrm{i}\omega_a u^0}\right)}
{B\left(1-|\rho_0|^2\right)}.
\label{eq4.20}
\end{equation}

The velocity field in Eq.~\eqref{eq2.30} has a physical meaning in the region
\[
|\Omega u^2|<1.
\]
A coherent state has a nonzero width along \(u^2\); hence, the condition for the applicability of the velocity profile should be formulated for the region in which this coherent state is concentrated. Let us denote \(\mu_2=\langle u^2\rangle\), \(\sigma_2=\Delta u^2\) and consider the region
\[
|u^2-\mu_2|\leq\kappa\sigma_2,
\qquad \kappa>0.
\]
Within this region, one has that
\[
|u^2|\leq|\mu_2|+\kappa\sigma_2.
\]
Thus, a sufficient condition for this region to lie entirely within \(|\Omega u^2|<1\) is
\begin{equation}
|\Omega|
\left(
|\langle u^2\rangle|
+\kappa\Delta u^2
\right)<1.
\label{eq4.21}
\end{equation}
Since the \(u^2\) distribution for the coherent states under consideration is Gaussian, under this condition the probability of leaving the region of applicability of the velocity profile does not exceed the quantity
\[
\operatorname{erfc}\left(\frac{\kappa}{\sqrt2}\right).
\]

According to Eq.~\eqref{eq2.29}, the mean values of the operators $u^1$, $u^2$, $\hat p^{\,1}$, and $\hat p^{\,2}$ calculated using $\psi_{z_1,z_2}$ coincide with the mean values of the same operators for the bispinor $\Psi_{z_1,z_2,\lambda}$  in Eq.~\eqref{eq4.12}. Recall that, at fixed $\lambda$, the bispinor $\Psi_{z_1,z_2,\lambda}$ is a generalized eigenstate of the operator $\widehat{\widetilde p}_3$:
\begin{equation}
\widehat{\widetilde p}_3\Psi_{z_1,z_2,\lambda}
=\frac{\lambda}{2}\Psi_{z_1,z_2,\lambda}.
\label{eq4.22}
\end{equation}

Using the Hamiltonian in Eq.~\eqref{eq3.12} and the definitions of the kinetic momenta in Eq.~\eqref{eq3.15}, we obtain a closed system of Ehrenfest equations for the mean values:
\begin{equation}
\begin{aligned}
&\frac{\mathrm{d}\langle u^1\rangle}{\mathrm{d} u^0}
=-\frac{2}{G_N-\lambda}\langle \hat\Pi_1\rangle,
\qquad
\frac{\mathrm{d}\langle u^2\rangle}{\mathrm{d} u^0}
=-\frac{2}{G_N-\lambda}\langle \hat\Pi_2\rangle,\\
&\frac{\mathrm{d}\langle \hat\Pi_1\rangle}{\mathrm{d} u^0}
=\frac{2\beta}{G_N-\lambda}\langle \hat\Pi_2\rangle,
\qquad
\frac{\mathrm{d}\langle \hat\Pi_2\rangle}{\mathrm{d} u^0}
=-\frac{2\beta}{G_N-\lambda}\langle \hat\Pi_1\rangle.
\end{aligned}
\label{eq4.23}
\end{equation}
In particular,
\begin{equation}
\frac{\mathrm{d}^2\langle \hat\Pi_j\rangle}{\mathrm{d} (u^0)^2}
+\omega_a^2\langle \hat\Pi_j\rangle=0,
\qquad j=1,2.
\label{eq4.24}
\end{equation}

\section{Classical interpretation and the transition to free motion}\label{sec:classical-limits}

In this section, we first clarify the relation between the evolution parameter $u^0$ and the Minkowski coordinates. We then compare the system of Ehrenfest Eqs.~\eqref{eq4.23} with a classical Hamiltonian system in the plane $(u^1,u^2)$. The limit $G_N\to0$ is studied in Sec.~\ref{subsec:free-limit}.

The transformations between $x^\mu$ and $u^\mu$ are given by~\eqref{eq2.4}. In particular,
\begin{equation}
u^0=x^-=\frac{x^0-x^3}{\sqrt2}.
\label{eq5.1}
\end{equation}
The variable $u^0=x^-$ is the evolution parameter in the Schr\"odinger Eq.~\eqref{eq3.4} and in the system of Ehrenfest equations~\eqref{eq4.23}. Since solutions with fixed $\lambda$ are plane waves in $u^3=x^+$, the mean value $\langle u^3\rangle$ is not defined for them. Therefore, these states do not determine the full mean worldline $\langle x^\mu\rangle$. In particular, the mean values $\langle u^1\rangle$ and $\langle u^2\rangle$ cannot be uniquely reparametrized by $x^0$ without additional localization of the wave packet in the direction perpendicular to the plane $(u^1,u^2)$.

\subsection{A classical system in the plane \texorpdfstring{$(u^1,u^2)$}{(u1,u2)}}\label{subsec:classical-system}

Let us write the classical symbol of the Hamiltonian in Eq.~\eqref{eq3.12}, with the constant term omitted, as follows. We denote the classical symbols of the operators $\hat p^{\,1}$ and $\hat p^{\,2}$ by $p^1$ and $p^2$:
\begin{equation}
H_{\mathrm{cl}}
=\frac{(p^1-\beta u^2)^2+(p^2)^2}{2m_{\mathrm{eff}}},
\qquad
m_{\mathrm{eff}}=\frac{\lambda-G_N}{2}.
\label{eq5.2}
\end{equation}
Here $m_{\mathrm{eff}}$ is a parameter of the reduced dynamics in the plane $(u^1,u^2)$ with the evolution parameter $u^0$. It differs from the neutrino mass $m_\nu$. The sign of $m_{\mathrm{eff}}$, which is determined by that of $\lambda-G_N$, can be either positive or negative. Therefore, the analogy with charged-particle motion formulated below Eq.~\eqref{eq5.4} is to be understood as a mathematical equivalence of the Hamiltonian systems in the plane $(u^1,u^2)$. The Lagrangian obtained from $H_{\mathrm{cl}}$  in Eq.~\eqref{eq5.2} by the Legendre transformation is given by
\begin{equation}
L=\frac{m_{\mathrm{eff}}}{2}
\left[(\dot u^1)^2+(\dot u^2)^2\right]
+\beta u^2\dot u^1,
\qquad
\dot u^j=\frac{\mathrm{d} u^j}{\mathrm{d} u^0}.
\label{eq5.3}
\end{equation}
A Lagrangian, which differs from Eq.~\eqref{eq5.3} by a total derivative, can be written in the symmetric form,
\begin{equation}
L_{\mathrm{sym}}
=\frac{m_{\mathrm{eff}}}{2}
\left[(\dot u^1)^2+(\dot u^2)^2\right]
+\frac{\beta}{2}\left(u^2\dot u^1-u^1\dot u^2\right).
\label{eq5.4}
\end{equation}
The Hamiltonian system generated by $H_{\mathrm{cl}}$  in Eq.~\eqref{eq5.2} is mathematically equivalent to the motion of a fictitious charged particle in the plane $(u^1,u^2)$ in a uniform magnetic field perpendicular to this plane. The quantity $(-\beta)$ plays the role of the product of the charge and the magnetic field. Hence, the cyclotron frequency with respect to $u^0$ is
\begin{equation}
\omega_c=-\frac{\beta}{m_{\mathrm{eff}}}
=\frac{2\beta}{G_N-\lambda}
=\frac{\sqrt2\,G_N\Omega}{G_N-\lambda}.
\label{eq5.5}
\end{equation}
Taking into account Eq.~\eqref{eq3.17}, we also have $\omega_c=-s\omega_a$. The center of the orbit is determined by
\begin{equation}
u_c^1=u^1-\frac{p^2}{\beta},
\qquad
u_c^2=\frac{p^1}{\beta},
\label{eq5.6}
\end{equation}
and the classical trajectory in the plane $(u^1,u^2)$ has the form
\begin{equation}
\begin{aligned}
&u^1(u^0)=u_c^1+R\cos(\omega_cu^0+\varphi),\\
&u^2(u^0)=u_c^2-R\sin(\omega_cu^0+\varphi),\\
&(u^1-u_c^1)^2+(u^2-u_c^2)^2=R^2.
\end{aligned}
\label{eq5.7}
\end{equation}
The orbit radius is determined by the initial condition,
\begin{equation}
R^2=\frac{(p^2_0)^2+(p^1_0-\beta u^2_0)^2}{\beta^2},
\qquad
\beta=\frac{G_N\Omega}{\sqrt2}.
\label{eq5.8}
\end{equation}
Here $p^j_0$ and $u^2_0$ are the values of $p^j$ and $u^2$ at $u^0=0$. The Hamilton equations, resulting from Eq.~\eqref{eq5.2}, coincide with the Ehrenfest Eqs.~\eqref{eq4.23} if we identify $u^j\leftrightarrow\langle u^j\rangle$ and
$p^j\leftrightarrow\langle \hat p^{\,j}\rangle$, $j=1,2$.

\subsection{The limit \texorpdfstring{$B\to0$}{B -> 0} and the transition to free motion}
\label{subsec:free-limit}

In this section, we examine the constructed coherent states in the limit $B \to 0$, when the oscillatory structure of the reduced Hamiltonian is violated. Since the operators $\hat a$ and $\hat b$ contain a factor $B^{-1/2}$, the average transverse momentum of a coherent state approaches to zero for fixed $z_1,z_2$. Thus, an explicit limit to the free motion is ambiguous. 

For this purpose, we, first, consider a $B$-dependent family of coherent states, with the parameters being chosen such that the initial average coordinates and momenta remain fixed. Then, the limit $B\to0$ is implemented by considering 
$G_N\to0$ for fixed $\lambda\neq0$ and $\Omega\neq0$. In this
case, the neutrino interaction with matter also vanishes. We will show that after the appropriate scaling, the coherent functions transform into the plane waves of a free massless neutrino, as well as the motion of the wave packet center occurs along a free classical trajectory.

Let us fix the spectral parameter $\lambda\neq0$ and the parameter $\Omega\neq0$. We consider the limit $G_N\to0^+$. For $G_N>0$, the value $s=\operatorname{sgn}\Omega$ remains constant, and
\begin{equation}
	\begin{aligned}
		&\beta\to0,
		\qquad
		B\to0,\\
		&\hat H_{\mathrm{free}} = \lim_{G_N\to0^+} \hat H =
		-\frac{1}{\lambda}
		\left(
		\partial_{u^1}^2+\partial_{u^2}^2
		\right)
		=
		\frac{(\hat p^{\,1})^2+(\hat p^{\,2})^2}{\lambda}.
	\end{aligned}
	\label{eq5.9}
\end{equation}
Plane-wave solutions of the reduced Eq.~\eqref{eq3.4} in the free limit, with the Hamiltonian in Eq.~\eqref{eq5.9}, have the form,
\begin{equation}
	\psi_{\bm k}^{(0)}(u^0,u^1,u^2)
	=
	\frac{1}{2\pi}
	\exp\left[
	\mathrm{i}k_1u^1+\mathrm{i}k_2u^2
	-\mathrm{i}\frac{k_1^2+k_2^2}{\lambda}u^0
	\right].
	\label{eq5.10}
\end{equation}

The operators $\hat a$ and $\hat b$ contain the factor $B^{-1/2}$ and have no finite operator limit at $B\to0$. At fixed $z_1,z_2$ and fixed $\rho_0$, $|\rho_0|<1$, Eq.~\eqref{eq4.19} implies $\langle\hat p^{\,1}\rangle=O(\sqrt B)$ and $\langle\hat p^{\,2}\rangle=O(\sqrt B)$. Hence, the mean transverse momentum tends to zero.

For \(B>0\), the parameters \(z_1,z_2\) are eigenvalues of the operators \(\hat A(u^0)\) and \(\hat b\). To select a family of coherent states that tends, as \(G_N\to0\), to the free plane wave in Eq.~\eqref{eq5.10} with a given \(\bm k\), let us relate the parameters \(z_1,z_2\) to the initial mean coordinates and momenta:
\begin{equation}
	u_0^j=\langle u^j\rangle(0),
	\qquad
	k_j=\langle\hat p^{\,j}\rangle(0),
	\qquad
	j=1,2.
	\label{eq5.11}
\end{equation}
Eq.~\eqref{eq4.19}, at $u^0=0$ we have
\begin{equation}
	u_0^1+\mathrm{i}s u_0^2
	=
	\mathrm{i}\sqrt{\frac{2}{B}}
	\left[
	z_2-
	\frac{z_1^*-\rho_0^*z_1}
	{\sqrt{1-|\rho_0|^2}}
	\right],
	\label{eq5.12}
\end{equation}
\begin{equation}
	k_1
	=
	\sqrt{2B}\,\operatorname{Re}z_2,
	\qquad
	k_2
	=
	-s\sqrt{\frac{2B}{1-|\rho_0|^2}}\,
	\operatorname{Im}\left(z_1-\rho_0z_1^*\right).
	\label{eq5.13}
\end{equation}
Solving these relations for $z_1,z_2$, we obtain, for any $B>0$,
\begin{align}
	&z_2 = z_2(G_N) =
	\frac{k_1+\mathrm{i}s k_2}{\sqrt{2B}}
	-\mathrm{i}\sqrt{\frac{B}{2}}\,u_0^1,
	\nonumber\\
	&z_1= z_1(G_N) = 
	\frac{(1+\rho_0)k_1-\mathrm{i}s(1-\rho_0)k_2}
	{\sqrt{2B(1-|\rho_0|^2)}}
	-s\sqrt{\frac{B}{2}}\,
	\frac{1+\rho_0}{\sqrt{1-|\rho_0|^2}}\,
	u_0^2.
	\label{eq5.14}
\end{align}
Thus, the free limit should be considered for a family of coherent states with fixed $u_0^j$ and $k_j$. Hence,
\begin{equation}
	z_1=O(B^{-1/2})=O(|G_N|^{-1/2}),
	\qquad
	z_2=O(B^{-1/2})=O(|G_N|^{-1/2}).
	\label{eq5.15}
\end{equation}
In Eq.~\eqref{eq4.11}, the quadratic dependence on $u^1,u^2$ is contained in the exponent of $\varphi_{00}$ and in the term $\rho_0\mathrm{e}^{-2\mathrm{i}\omega_a u^0}w^2/2$, where
$w=\sqrt{B/2}\,(u^1+\mathrm{i}s u^2)$.
Therefore, the entire part of the exponent of \(\psi_{z_1,z_2}\) that is quadratic in \(u^1,u^2\) is of order \(B\) and tends to zero as $B\to0$.

For each $B>0$, the function $\psi_{z_1,z_2}$ is normalized in $L^2(\mathbb R^2)$. At $B\to0$, the width of the corresponding Gaussian packet increases without bound, whereas the free plane wave in Eq.~\eqref{eq5.10} does not belong to $L^2(\mathbb R^2)$ and is delta-normalized with respect to $\bm k$. To obtain the plane-wave limit, let us introduce a function $\widetilde\psi_{\bm k,B}$ differing from $\psi_{z_1,z_2}$ by a factor $\mathcal R_B$ that depends on $B$ and is independent of the coordinates. We choose $\mathcal R_B$ so that, at \(u^0=u^1=u^2=0\), the value of \(\widetilde\psi_{\bm k,B}\) coincides with that of the free plane wave in Eq.~\eqref{eq5.10}, namely \(1/(2\pi)\). From Eqs.~\eqref{eq4.11} and~\eqref{eq5.14}, we obtain
\begin{subequations}
	\label{eq5.16}
	\begin{flalign}
		&\widetilde\psi_{\bm k,B}
		=
		\mathcal R_B\psi_{z_1,z_2},
		\qquad
		\widetilde\psi_{\bm k,B}(0,0,0)
		=
		\frac{1}{2\pi},
		&&
		\label{eq5.16a}
		\\[1mm]
		&\mathcal R_B
		=
		\frac{1}{2\pi\,\psi_{z_1,z_2}(0,0,0)}
		\nonumber
		\\
		&\hspace{2em}
		=
		\frac{1}
		{\sqrt{2\pi B}\,(1-|\rho_0|^2)^{1/4}}
		\exp\left[
		\frac{|z_1|^2+|z_2|^2}{2}
		-\frac{\rho_0^*z_1^2}{2}
		-\sqrt{1-|\rho_0|^2}\,z_1z_2
		+\frac{\rho_0z_2^2}{2}
		\right],
		&&
		\label{eq5.16b}
		\\[1mm]
		&\sqrt{1-|\rho_0|^2}\,z_1z_2-\rho_0z_2^2
		=
		\left[
		\sqrt{1-|\rho_0|^2}\,z_1-\rho_0z_2
		\right]z_2
		=
		\frac{k_1^2+k_2^2}{2B}+O(1),
		\nonumber
		\\
		&
		\omega_a
		=
		\frac{2B}{\lambda}+O(B^2),
		\qquad
		\varepsilon_0=O(B),
		&&
		\label{eq5.16c}
		\\[1mm]
		&\mathrm{e}^{-\mathrm{i}\omega_a u^0}-1
		=
		-\mathrm{i}\omega_a u^0+O(B^2),
		\qquad
		\mathrm{e}^{-2\mathrm{i}\omega_a u^0}-1
		=
		-2\mathrm{i}\omega_a u^0+O(B^2),
		\nonumber
		\\
		&-\mathrm{i}\varepsilon_0u^0
		+\sqrt{1-|\rho_0|^2}\,z_1z_2
		\left(
		\mathrm{e}^{-\mathrm{i}\omega_a u^0}-1
		\right)
		-\frac{\rho_0z_2^2}{2}
		\left(
		\mathrm{e}^{-2\mathrm{i}\omega_a u^0}-1
		\right)
		=
		-\mathrm{i}\frac{k_1^2+k_2^2}{\lambda}u^0+O(B).
		&&
		\label{eq5.16d}
	\end{flalign}
\end{subequations}
In Eq.~\eqref{eq5.16d}, $u^0$ is fixed. According to Eq.~\eqref{eq5.14}, $z_1z_2=O(B^{-1})$ and $z_2^2=O(B^{-1})$. Hence, after multiplication by $z_1z_2$ or $z_2^2$, the $O(B^2)$ remainders in the expansions of the exponentials give a contribution
$O(B)$.

Substituting Eq.~\eqref{eq5.14} into the part of the exponent  in Eq.~\eqref{eq4.11} that is linear in $u^1$ and $u^2$, gives $\mathrm{i}k_1u^1+\mathrm{i}k_2u^2+O(B)$. The quadratic part of the exponent in $u^1,u^2$ is of order $B$. Thus, at fixed $u^0$, the function $\widetilde\psi_{\bm k,B}$ tends to the free plane wave. In particular, in the sense of distributions,
\begin{equation}
	\widetilde\psi_{\bm k,B}(u^0,u^1,u^2)
	\longrightarrow
	\psi_{\bm k}^{(0)}(u^0,u^1,u^2)
	=
	\frac{1}{2\pi}
	\exp\left[
	\mathrm{i}k_1u^1+\mathrm{i}k_2u^2
	-\mathrm{i}\frac{k_1^2+k_2^2}{\lambda}u^0
	\right].
	\label{eq5.17}
\end{equation}

The corresponding solution of the free Dirac equation obtained from Eq.~\eqref{eq2.10} for $G_N=0$ and $m_\nu=0$ has the form,
\begin{equation}
	\Psi_{\bm k}^{(0)}(u)
	=
	\frac{1}{2\pi\sqrt{4\pi\sqrt2}}
	\exp\left[
	-\frac{\mathrm{i}\lambda}{2}u^3
	+\mathrm{i}k_1u^1+\mathrm{i}k_2u^2
	-\mathrm{i}\frac{k_1^2+k_2^2}{\lambda}u^0
	\right]
	\begin{pmatrix}
		\dfrac{\sqrt2}{\lambda}(k_1-\mathrm{i}k_2)\\[2mm]
		1\\
		0\\
		0
	\end{pmatrix}.
	\label{eq5.18}
\end{equation}
Direct substitution shows that Eq.~\eqref{eq5.18} satisfies the free massless Dirac equation obtained from Eq.~\eqref{eq2.10} for $G_N=0$ and $m_\nu=0$.

Let us separately consider the free Ehrenfest system for a localized wave packet with initial mean coordinates \(u_0^j\) and momenta \(k_j\) introduced  in Eq.~\eqref{eq5.11},
\begin{equation}
	\begin{aligned}
		&\frac{\mathrm{d}\langle u^1\rangle}{\mathrm{d}u^0}
		=
		\frac{2}{\lambda}\langle\hat p^{\,1}\rangle,
		\qquad
		\frac{\mathrm{d}\langle u^2\rangle}{\mathrm{d}u^0}
		=
		\frac{2}{\lambda}\langle\hat p^{\,2}\rangle,\\
		&\frac{\mathrm{d}\langle\hat p^{\,1}\rangle}{\mathrm{d}u^0}
		=
		0,
		\qquad
		\frac{\mathrm{d}\langle\hat p^{\,2}\rangle}{\mathrm{d}u^0}
		=
		0.
	\end{aligned}
	\label{eq5.19}
\end{equation}
Consider a packet with the mean transverse momentum $\bm k=(k_1,k_2)$, where $\bm k$ is the same parameter as in the free plane wave in Eq.~\eqref{eq5.10}, i.e.,
$\langle\hat p^{\,j}\rangle=k_j$, $j=1,2$.
Integrating Eq.~\eqref{eq5.19} gives
\begin{equation}
	\langle u^j\rangle(u^0)
	=
	u_0^j+\frac{2k_j}{\lambda}u^0,
	\qquad
	u_0^j=\langle u^j\rangle(0),
	\qquad
	j=1,2.
	\label{eq5.20}
\end{equation}

Let us verify the free limit of the classical trajectory in Eq.~\eqref{eq5.7}. Equations~\eqref{eq5.7} and~\eqref{eq5.8} imply
\begin{equation}
	|\omega_c|R
	=
	\sqrt{(\dot u_0^1)^2+(\dot u_0^2)^2},
	\qquad
	\dot u_0^j=\dot u^j(0).
	\label{eq5.21}
\end{equation}
As $\omega_c\to0$, at fixed $u_0^j,\dot u_0^j$, the expansion of Eq.~\eqref{eq5.7} gives
\begin{equation}
	u^j(u^0)
	=
	u_0^j+\dot u_0^j u^0
	+O\!\left(|\omega_c|(u^0)^2\right),
	\qquad
	j=1,2,
	\label{eq5.22}
\end{equation}
uniformly in $u^0$ on each fixed bounded interval. Identifying the initial classical transverse momentum with $\bm k$, $p_0^j=k_j$, we have
$\dot u_0^j=2k_j/\lambda$.
Setting $u_0^j=\langle u^j\rangle(0)$, we find that the free limit of the classical trajectory in Eq.~\eqref{eq5.22} coincides with the motion of the wave packet center in Eq.~\eqref{eq5.20}.

Thus, for the particular implementation of the limit $B\to0$ considered
here, namely $G_N\to0^+$ at fixed $\lambda\neq0$ and $\Omega\neq0$, the reduced Hamiltonian tends to the free Hamiltonian. For a family of coherent states with fixed initial mean coordinates \(u_0^j\) and momenta \(k_j\), the parameters \(z_1,z_2\) are given in Eq.~\eqref{eq5.14}, and the renormalized functions \(\widetilde\psi_{\bm k,B}\) tend to free plane waves with transverse momentum \(\bm k\). For a localized free wave packet with the
same initial mean coordinates and momenta, the Ehrenfest Eqs.~\eqref{eq5.19} describe uniform rectilinear motion of its center. The limit $B\to0$ of the corresponding classical trajectory,
realized here through $G_N\to0^+$, coincides with this motion.

\section{Neutrino coherent states}\label{sec:coherent-classification}

In this section, we select a family of generalized coherent states with minimal spatial variance in the plane $(u^1,u^2)$ and adopt them as neutrino coherent states. We also obtain explicit expressions for the corresponding creation and annihilation operators acting in the space of bispinors in Eqs.~\eqref{eq6.16}--\eqref{eq6.20}.

\subsection{Coherent states}\label{subsec:explicit-coherent-states}

Let us introduce the following definition:

\begin{definition}\label{def:generalized-coherent-neutrino}
The collection of solutions $\Psi_{z_1,z_2,\lambda}$ defined by Eq.~\eqref{eq4.12} for
\begin{equation}
z_1,z_2,\rho_0\in\mathbb C,
\qquad |\rho_0|<1,
\qquad \lambda\in\mathbb R\setminus\{G_N\},
\label{eq6.1}
\end{equation}
is called a \emph{complete set of generalized neutrino coherent states}.
\end{definition}

For each fixed $\rho_0$, the subfamily parametrized by $z_1,z_2$ and $\lambda$ is overcomplete according to Lemma~\ref{lem:coherent-dirac-completeness} and provides the resolution of the identity in Eq.~\eqref{eq4.15}.

\begin{definition}\label{def:squeezed-coherent-neutrino}
Generalized neutrino coherent states with
\begin{equation}
\rho_0\neq0
\label{eq6.2}
\end{equation}
are called \emph{the squeezed neutrino coherent states}.
\end{definition}

From the family of generalized coherent states parametrized by the complex number $\rho_0$, $|\rho_0|<1$, let us select a complete family of states that minimize the Heisenberg uncertainty relation for the kinetic momenta:
$$
 \Delta\Pi_1\,\Delta\Pi_2\geq
\frac12\left|\langle [\hat\Pi_1,\hat\Pi_2]\rangle\right|=B/2.
$$
To this end, let us consider the coordinate variances $(\Delta u^j)^2=\langle (u^j-\langle u^j\rangle)^2\rangle$ in the plane $(u^1,u^2)$. From the inverse formulas in Eq.~\eqref{eq3.21}, the second eigenvalue equation in Eq.~\eqref{eq4.1}, and the inverse transformation in Eq.~\eqref{eq4.17}, we obtain
\begin{equation}
\begin{aligned}
&(\Delta u^1)^2=\frac{1+\operatorname{Re}\!\left(\rho_0\mathrm{e}^{-2\mathrm{i}\omega_a u^0}\right)}{B\left(1-|\rho_0|^2\right)},\qquad
(\Delta u^2)^2=\frac{1-\operatorname{Re}\!\left(\rho_0\mathrm{e}^{-2\mathrm{i}\omega_a u^0}\right)}{B\left(1-|\rho_0|^2\right)},\\
&\frac12\langle (u^1-\langle u^1\rangle)(u^2-\langle u^2\rangle)+(u^2-\langle u^2\rangle)(u^1-\langle u^1\rangle)\rangle
=-\frac{s\,\operatorname{Im}\!\left(\rho_0\mathrm{e}^{-2\mathrm{i}\omega_a u^0}\right)}{B\left(1-|\rho_0|^2\right)}.
\end{aligned}
\label{eq6.3}
\end{equation}
The expression for $(\Delta u^2)^2$  in Eq.~\eqref{eq6.3} coincides with Eq.~\eqref{eq4.20}. Let us introduce the covariance matrix of the coordinates $u^1,u^2$ with the components
\begin{equation}
	\Sigma^{ij}
	=
	\frac{1}{2}
	\left\langle
	\Delta u^i\Delta u^j+\Delta u^j\Delta u^i
	\right\rangle,
	\qquad
	\Delta u^i=u^i-\langle u^i\rangle,
	\qquad
	i,j=1,2.
	\label{eq6.4}
\end{equation}
According to Eq.~\eqref{eq6.3}, the eigenvalues of the matrix $\Sigma^{ij}$ are
\begin{equation}
	\frac{1}{B(1+|\rho_0|)},
	\qquad
	\frac{1}{B(1-|\rho_0|)}.
	\label{eq6.5}
\end{equation}
They are the coordinate variances along the principal axes of the spatial distribution. Hence, for $\rho_0\neq0$, the wave packet is squeezed along one principal direction and stretched along the other. The orientation of these directions is determined by the phase
$\rho_0\mathrm{e}^{-2\mathrm{i}\omega_a u^0}$.

Summing the first two equalities in Eq.~\eqref{eq6.3}, we obtain
\begin{equation}
(\Delta u^1)^2+(\Delta u^2)^2
=\frac{2}{B(1-|\rho_0|^2)}
\geq\frac{2}{B}.
\label{eq6.6}
\end{equation}
Equality is possible only for $|\rho_0|=0$, i.e., $\rho_0=0$. For the kinetic momenta, the inverse transformation in Eq.~\eqref{eq4.17} also implies
\begin{equation}
(\Delta\Pi_1)^2(\Delta\Pi_2)^2
=\frac{B^2}{4}\left[1+
\frac{4\left[\operatorname{Im}\!\left(\rho_0\mathrm{e}^{-2\mathrm{i}\omega_a u^0}\right)\right]^2}
{\left(1-|\rho_0|^2\right)^2}\right]
\geq\frac{B^2}{4}.
\label{eq6.7}
\end{equation}
Since $\omega_a\neq0$ for $B>0$ and $\lambda\neq G_N$, equality in the Heisenberg uncertainty relation holds at all $u^0$ if and only if $\rho_0=0$. For $\rho_0=0$, we have
\begin{equation}
(\Delta u^1)^2=(\Delta u^2)^2=\frac{1}{B},
\qquad
\frac12\langle 
(u^1-\langle u^1\rangle)(u^2-\langle u^2\rangle)
+(u^2-\langle u^2\rangle)(u^1-\langle u^1\rangle)\rangle=0,
\label{eq6.8}
\end{equation}
and for the kinetic momenta, where
$(\Delta\Pi_j)^2=\langle (\hat\Pi_j-\langle \hat\Pi_j\rangle)^2\rangle$,
\begin{equation}
(\Delta\Pi_1)^2=(\Delta\Pi_2)^2=\frac{B}{2},
\qquad
\Delta\Pi_1\,\Delta\Pi_2
=\frac12\left|\langle [\hat\Pi_1,\hat\Pi_2]\rangle\right|
=\frac{B}{2}.
\label{eq6.9}
\end{equation}
Moreover, the linear integral of motion~\eqref{eq3.32} takes the form
\begin{equation}
\hat A(u^0)=\mathrm{e}^{\mathrm{i}\omega_a u^0}\hat a.
\label{eq6.10}
\end{equation}
Thus, the covariance matrix $\Sigma^{ij}$ is time independent, whereas the wave packet center moves along the classical trajectory determined by the system in Eq.~\eqref{eq4.23} and Eq.~\eqref{eq5.7}.

\begin{definition}\label{def:coherent-neutrino}
The states in the subfamily of the complete set of generalized neutrino coherent states defined by the condition
\begin{equation}
\rho_0=0,
\label{eq6.11}
\end{equation}
are called \emph{neutrino coherent states}.
\end{definition}

For $\rho_0=0$, the function in Eq.~\eqref{eq4.11} simplifies. For $s=+1$, we have
\begin{subequations}\label{eq6.12}
\begin{equation}
\begin{aligned}
&\left.\psi_{z_1,z_2}(u^0,u^1,u^2)\right|_{\rho_0=0,\,s=+1}
=\sqrt{\frac{B}{2\pi}}\,\exp\Bigg\{
-\frac{B}{4}\left[(u^1)^2+(u^2)^2\right]
+\frac{\mathrm{i} B}{2}u^1u^2\\
&\qquad
-\frac{|z_1|^2+|z_2|^2}{2}
-\frac{\mathrm{i} G_N}{2}u^0
+z_1z_2\mathrm{e}^{-\mathrm{i}\omega_a u^0}\\
&\qquad
+\sqrt{\frac{B}{2}}\left[
 z_1\mathrm{e}^{-\mathrm{i}\omega_a u^0}(\mathrm{i} u^1-u^2)
+z_2(\mathrm{i} u^1+u^2)
\right]
\Bigg\}.
\end{aligned}
\label{eq6.12a}
\end{equation}
For $s=-1$, we have, respectively,
\begin{equation}
\begin{aligned}
&\left.\psi_{z_1,z_2}(u^0,u^1,u^2)\right|_{\rho_0=0,\,s=-1}
=\sqrt{\frac{B}{2\pi}}\,\exp\Bigg\{
-\frac{B}{4}\left[(u^1)^2+(u^2)^2\right]
-\frac{\mathrm{i} B}{2}u^1u^2\\
&\qquad
-\frac{|z_1|^2+|z_2|^2}{2}
-\mathrm{i}\left(\frac{G_N}{2}+\omega_a\right)u^0
+z_1z_2\mathrm{e}^{-\mathrm{i}\omega_a u^0}\\
&\qquad
+\sqrt{\frac{B}{2}}\left[
 z_1\mathrm{e}^{-\mathrm{i}\omega_a u^0}(\mathrm{i} u^1+u^2)
+z_2(\mathrm{i} u^1-u^2)
\right]
\Bigg\}.
\end{aligned}
\label{eq6.12b}
\end{equation}
Equations~\eqref{eq6.12a} and~\eqref{eq6.12b} are related by the reflection $u^2\to-u^2$ and an additional dynamical phase:
\begin{equation}
\left.\psi_{z_1,z_2}(u^0,u^1,u^2)\right|_{\rho_0=0,\,s=-1}
=\mathrm{e}^{-\mathrm{i}\omega_a u^0}
\left.\psi_{z_1,z_2}(u^0,u^1,-u^2)\right|_{\rho_0=0,\,s=+1}.
\label{eq6.12c}
\end{equation}
\end{subequations}
Recall that
$B=|G_N\Omega|/\sqrt2$, $s=\operatorname{sgn}(G_N\Omega)$,
$\omega_a=-\sqrt2|G_N\Omega|/(G_N-\lambda)$, while the quantity $\varepsilon_0$ in Eq.~\eqref{eq3.17} can be written as
$\varepsilon_0=[G_N+(1-s)\omega_a]/2$.

The corresponding set of solutions of the original Dirac Eq.~\eqref{eq2.10} is obtained from Eq.~\eqref{eq4.12} by evaluating the action of the differential operator in the first component on the functions in Eqs.~\eqref{eq6.12a} and~\eqref{eq6.12b}. For $s=+1$, we obtain
\begin{equation}
\begin{aligned}
&\left.\Psi_{z_1,z_2,\lambda}(u)\right|_{\rho_0=0,\,s=+1}
=\frac{\mathrm{e}^{-\mathrm{i}\lambda u^3/2}}{\sqrt{4\pi\sqrt2}}
\begin{pmatrix}
\dfrac{2\sqrt B}{\lambda-G_N}\,
 z_1\mathrm{e}^{-\mathrm{i}\omega_a u^0}\\[3mm]
1\\0\\0
\end{pmatrix}\\
&\qquad\times
\left.\psi_{z_1,z_2}(u^0,u^1,u^2)\right|_{\rho_0=0,\,s=+1},
\qquad \lambda\neq G_N.
\end{aligned}
\label{eq6.13}
\end{equation}
Equation~\eqref{eq6.13} contains the function in Eq.~\eqref{eq6.12a}. For $s=-1$, the action of the operator in the first component of the full bispinor on the function in Eq.~\eqref{eq6.12b} gives
\begin{equation}
\begin{aligned}
&\left.\Psi_{z_1,z_2,\lambda}(u)\right|_{\rho_0=0,\,s=-1}
=\frac{\mathrm{e}^{-\mathrm{i}\lambda u^3/2}}{\sqrt{4\pi\sqrt2}}
\begin{pmatrix}
\dfrac{2\sqrt B}{\lambda-G_N}\left[
 z_2+\sqrt{\dfrac{B}{2}}(\mathrm{i} u^1+u^2)
\right]\\[3mm]
1\\0\\0
\end{pmatrix}\\
&\qquad\times
\left.\psi_{z_1,z_2}(u^0,u^1,u^2)\right|_{\rho_0=0,\,s=-1},
\qquad \lambda\neq G_N.
\end{aligned}
\label{eq6.14}
\end{equation}

\subsection{Creation and annihilation operators in the bispinor space}\label{subsec:full-bispinor-oscillators}

At fixed $s=\pm1$ and $\lambda\neq G_N$, let us rewrite Eqs.~\eqref{eq6.13}--\eqref{eq6.14} as follows:
\begin{equation}
\Psi_s[\psi](u)
=
\frac{\mathrm{e}^{-\mathrm{i}\lambda u^3/2}}{\sqrt{4\pi\sqrt2}}
\begin{pmatrix}
 c\,\hat D_s\\[1mm]
 1\\
 0\\
 0
\end{pmatrix}
\psi(u^0,u^1,u^2),
\qquad
c=\frac{2\sqrt B}{\lambda-G_N},
\qquad
\hat D_+=\hat a_+,
\qquad
\hat D_-=\hat a_-^\dagger.
\label{eq6.15}
\end{equation}

Equation~\eqref{eq6.15} defines a map from solutions of the reduced Eq.~\eqref{eq3.4} to the space of bispinor solutions of the original Dirac Eq.~\eqref{eq2.10} describing active neutrino states.

The annihilation and creation operators for the bispinors in Eq.~\eqref{eq6.15} can be chosen as
\begin{equation}
\begin{aligned}
&\hat a_+^{(4)}=\hat a_+ I_4,
&&
\hat a_+^{(4)\sharp}
=\hat a_+^\dagger I_4
+\frac{c}{4}(1+\gamma^5)\gamma^0(\gamma^1+\mathrm{i}\gamma^2),\\[2mm]
&\hat a_-^{(4)}
=\hat a_- I_4
-\frac{c}{4}(1+\gamma^5)\gamma^0(\gamma^1+\mathrm{i}\gamma^2),
&&
\hat a_-^{(4)\sharp}=\hat a_-^\dagger I_4,
\end{aligned}
\label{eq6.16}
\end{equation}
where $\sharp$ denotes the adjoint with respect to the scalar product in Eq.~\eqref{eq2.29}. In the chiral representation in Eq.~\eqref{eq2.25}, one gets that
\begin{equation}
	\frac14(1+\gamma^5)\gamma^0(\gamma^1+\mathrm{i}\gamma^2)
	=
	\begin{pmatrix}
		0&1&0&0\\
		0&0&0&0\\
		0&0&0&0\\
		0&0&0&0
	\end{pmatrix}.
	\label{eq6.17}
\end{equation}
Note that these creation and annihilation operators are related to $\hat a$ and $\hat a^\dagger$ as follows:
\begin{equation}
\begin{aligned}
&\hat a_s^{(4)}\Psi_s[\psi]=\Psi_s[\hat a_s\psi],
\qquad
\hat a_s^{(4)\sharp}\Psi_s[\psi]=\Psi_s[\hat a_s^\dagger\psi],\\
&[\hat a_s^{(4)},\hat a_s^{(4)\sharp}]=I_4,
\qquad s=\pm1.
\end{aligned}
\label{eq6.18}
\end{equation}
Indeed, for $s=+1$, the additional matrix term in $\hat a_+^{(4)\sharp}$ adds $c\psi$ to the first component and replaces $c\hat a_+^\dagger\hat a_+\psi$ by $c\hat a_+\hat a_+^\dagger\psi$. For $s=-1$, the additional term in $\hat a_-^{(4)}$ subtracts $c\psi$ and replaces $c\hat a_-\hat a_-^\dagger\psi$ by $c\hat a_-^\dagger\hat a_-\psi$. Hence, Eq.~\eqref{eq6.18} follows from
$[\hat a_s,\hat a_s^\dagger]=1$.

The $u^0$-dependent creation and annihilation operators that are integrals of motion of the original Dirac Eq.~\eqref{eq2.10} on the space of bispinors in Eq.~\eqref{eq6.15} have the form,
\begin{equation}
\begin{aligned}
&\hat A_s^{(4)}(u^0)
=\mathrm{e}^{\mathrm{i}\omega_a u^0}\hat a_s^{(4)},
\qquad
\hat A_s^{(4)\sharp}(u^0)
=\mathrm{e}^{-\mathrm{i}\omega_a u^0}\hat a_s^{(4)\sharp},\\
&[\hat A_s^{(4)}(u^0),\hat A_s^{(4)\sharp}(u^0)]=I_4.
\end{aligned}
\label{eq6.19}
\end{equation}

Since the second pair of creation and annihilation operators $\hat b$ and $\hat b^\dagger$ commutes with $\hat D_s$, the second pair in the bispinor space is proportional to the identity matrix: $\hat b^{(4)} = \hat b I_4$, $\hat b^{(4)^\dagger} = \hat b^\dagger I_4$. The bispinors in Eqs.~\eqref{eq6.13}--\eqref{eq6.14} obey the equations,
\begin{equation}
\hat A_s^{(4)}(u^0)\Psi_{z_1,z_2,\lambda}
=z_1\Psi_{z_1,z_2,\lambda},
\qquad
(\hat b I_4)\Psi_{z_1,z_2,\lambda}
=z_2\Psi_{z_1,z_2,\lambda},
\qquad s=\pm1.
\label{eq6.20}
\end{equation}

\section{The degenerate case \texorpdfstring{$\lambda=G_N$}{lambda = GN} and coherent states}\label{sec:singular-lambda}

Note that, for $\lambda=G_N$, Eq.~\eqref{eq3.3} is inapplicable. Hence, this case must be studied separately.

In this section, we consider the exceptional value $\lambda=G_N$ in the active left-chiral subsystem of the massless Dirac equation. Accordingly, both right-chiral bispinor components are set to zero below. A complete classification of free right-chiral solutions at $\lambda=G_N$ is not given here.

We first consider $s=+1$ and construct coherent states, normalized in $L^2(\mathbb R^2)$, of the second pair of creation and annihilation operators. Then, we consider $s=-1$, for which the solutions at $\lambda=G_N$ have zero norm with respect to the scalar product in Eq.~\eqref{eq2.29} on the hypersurface $u^0=\mathrm{const}$.

\subsection{Integration of the Dirac equation for \texorpdfstring{$\lambda=G_N$}{lambda = GN}}\label{subsec:singular-direct}

Let us write the solution of the original Dirac Eq.~\eqref{eq2.10} in the form,
\begin{equation}
\Psi(u)=
\exp\left(-\frac{\mathrm{i}\lambda}{2}u^3\right)
\begin{pmatrix}
\phi_L\\
\psi_L\\
0\\
0
\end{pmatrix}.
\label{eq7.1}
\end{equation}
Substituting Eq.~\eqref{eq7.1} into Eqs.~\eqref{eq2.14}--\eqref{eq2.15}, we obtain the system of equations,
\begin{align}
&(\lambda-G_N)\phi_L
+\left[\sqrt2(\partial_{u^2}+\mathrm{i}\partial_{u^1})
+G_N\Omega u^2\right]\psi_L=0, \label{eq7.2}\\
&\left[-\sqrt2(\partial_{u^2}-\mathrm{i}\partial_{u^1})
+G_N\Omega u^2\right]\phi_L
+\left(2\mathrm{i}\partial_{u^0}-G_N\right)\psi_L=0. \label{eq7.3}
\end{align}
For $\lambda\neq G_N$, Eq.~\eqref{eq7.2} reproduces Eq.~\eqref{eq3.3}. For
\begin{equation}
\lambda=G_N
\label{eq7.4}
\end{equation}
Equation~\eqref{eq7.2} becomes an independent differential equation for $\psi_L$.

Let us first consider $s=+1$. Then $\beta=B>0$. The definitions in Eqs.~\eqref{eq3.15}--\eqref{eq3.16} directly imply the identities,
\begin{equation}
\begin{aligned}
&+\sqrt2(\partial_{u^2}+\mathrm{i}\partial_{u^1})+G_N\Omega u^2
=-2\sqrt B\,\hat a,\\
&-\sqrt2(\partial_{u^2}-\mathrm{i}\partial_{u^1})+G_N\Omega u^2
=-2\sqrt B\,\hat a^\dagger.
\end{aligned}
\label{eq7.5}
\end{equation}
Hence, under the condition in Eq.~\eqref{eq7.4}, the system in Eqs.~\eqref{eq7.2}--\eqref{eq7.3} takes the form
\begin{equation}
\hat a\psi_L=0,
\qquad
-2\sqrt B\,\hat a^\dagger\phi_L
+\left(2\mathrm{i}\partial_{u^0}-G_N\right)\psi_L=0.
\label{eq7.6}
\end{equation}

The following lemma holds.

\begin{lemma}\label{lem:singular-plus-case}
Let $s=+1$, $\lambda=G_N$ and, for each $u^0$, let the functions $\phi_L$ and $\psi_L$ belong to the domains of the creation and annihilation operators $\hat{a},\hat{a}^\dagger$ in $L^2(\mathbb R^2)$. Then all finite-norm solutions of the system in Eq.~\eqref{eq7.6} have the form,
\begin{equation}
\phi_L=0,
\qquad
\psi_L(u^0,u^1,u^2)
=\mathrm{e}^{-\mathrm{i}G_Nu^0/2}\,\chi(u^1,u^2),
\qquad
\hat a\chi=0.
\label{eq7.7}
\end{equation}
\end{lemma}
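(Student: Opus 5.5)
The plan is to exploit the orthogonality between the range of $\hat a^\dagger$ and the kernel of $\hat a$. By the first equation in Eq.~\eqref{eq7.6}, at each fixed $u^0$ the function $\psi_L(u^0,\cdot,\cdot)$ lies in $\ker\hat a$. Since $\hat a$ commutes with $\partial_{u^0}$ (it involves only $u^1,u^2$), the derivative $\partial_{u^0}\psi_L$ also lies in $\ker\hat a$; the domain hypothesis of the lemma lets us differentiate under the operator. Hence the term $(2\mathrm i\partial_{u^0}-G_N)\psi_L$ in the second equation of Eq.~\eqref{eq7.6} belongs to $\ker\hat a$. The same equation says this term equals $2\sqrt B\,\hat a^\dagger\phi_L$, which lies in $\operatorname{Ran}\hat a^\dagger$.

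The key step is to show that $\ker\hat a\cap\operatorname{Ran}\hat a^\dagger=\{0\}$ for $\phi_L$ in the domain of $\hat a^\dagger$. Let $g=\hat a^\dagger\phi_L$ with $\hat a g=0$. Then
\[
\|g\|^2=(\hat a^\dagger\phi_L,g)=(\phi_L,\hat a g)=0 ,
\]
so $g=0$. This is the point where the domain assumption is genuinely needed, to justify moving the adjoint across. Alternatively one can expand in the basis $\lvert n,\ell\rangle$ of Eq.~\eqref{eq3.26}: $\ker\hat a$ is spanned by $\lvert 0,\ell\rangle$, while $\operatorname{Ran}\hat a^\dagger$ lies in the span of the states with $n\ge1$, and these two subspaces are orthogonal.

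With this established, both sides of the second equation must vanish separately:
\[
\hat a^\dagger\phi_L=0,\qquad (2\mathrm i\partial_{u^0}-G_N)\psi_L=0 .
\]
The number operator satisfies $\hat a\hat a^\dagger=\hat N_a+1\ge1$. Therefore $\|\hat a^\dagger\phi\|^2=(\phi,(\hat N_a+1)\phi)\ge\|\phi\|^2$, so $\hat a^\dagger$ has trivial kernel in $L^2(\mathbb R^2)$ and $\phi_L=0$. The equation $2\mathrm i\partial_{u^0}\psi_L=G_N\psi_L$ integrates, pointwise in $(u^1,u^2)$ or as an $L^2$-valued ODE, to $\psi_L=\mathrm e^{-\mathrm iG_Nu^0/2}\chi(u^1,u^2)$. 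Then $\hat a\psi_L=0$ gives $\hat a\chi=0$, which is Eq.~\eqref{eq7.7}.

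Conversely, direct substitution shows that any such pair solves Eq.~\eqref{eq7.6}. The main obstacle is the rigor of the orthogonality step, namely commuting $\partial_{u^0}$ with $\hat a$ and using adjointness on the stated domains. I would handle it by working with coefficients in the basis $\lvert n,\ell\rangle$, where each component equation becomes an ordinary differential equation in $u^0$ and everything is explicit.
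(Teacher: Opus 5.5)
Your proposal is correct and follows essentially the same route as the paper. You place $(2\mathrm i\partial_{u^0}-G_N)\psi_L$ in $\ker\hat a$, use $\langle\eta,\hat a^\dagger\phi_L\rangle=\langle\hat a\eta,\phi_L\rangle=0$ (your $\|g\|^2=0$ computation) to force both terms of the second equation in Eq.~\eqref{eq7.6} to vanish separately, and then conclude via the trivial kernel of $\hat a^\dagger$ and integration in $u^0$. Your bound $\|\hat a^\dagger\phi\|^2\ge\|\phi\|^2$ usefully justifies the injectivity of $\hat a^\dagger$, which the paper only asserts.
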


\begin{proof}
The first equation in Eq.~\eqref{eq7.6} and the independence of $\hat a$ from $u^0$ imply $\hat a\partial_{u^0}\psi_L=0$. Hence, $(2\mathrm{i}\partial_{u^0}-G_N)\psi_L\in\ker\hat a$. On the other hand, for any $\eta\in\ker\hat a$ and admissible $\phi_L$, we have
\begin{equation}
\left\langle \eta,\hat a^\dagger\phi_L\right\rangle
=\left\langle \hat a\eta,\phi_L\right\rangle=0.
\label{eq7.8}
\end{equation}
Hence, the two terms in the second equation in Eq.~\eqref{eq7.6} belong to mutually orthogonal subspaces and should vanish separately,
\begin{equation}
\left(2\mathrm{i}\partial_{u^0}-G_N\right)\psi_L=0,
\qquad
\hat a^\dagger\phi_L=0.
\label{eq7.9}
\end{equation}
The first equality gives the time dependence in Eq.~\eqref{eq7.7}. Since the operator \(\hat a^\dagger\) has a trivial kernel in \(L^2(\mathbb R^2)\), we have \(\phi_L=0\).
\end{proof}

Substituting Eq.~\eqref{eq7.7} into Eq.~\eqref{eq7.1} and taking into account $\lambda=G_N$, we obtain the bispinor
\begin{equation}
	\Psi(u)
	=
	\exp\left[
	-\frac{\mathrm{i}G_N}{2}(u^0+u^3)
	\right]
	\begin{pmatrix}
		0\\
		\chi(u^1,u^2)\\
		0\\
		0
	\end{pmatrix},
	\qquad
	\hat a\chi=0.
	\label{eq7.10}
\end{equation}
Thus, for $s=+1$ and $\lambda=G_N$, the first bispinor component vanishes, and its dependence on $(u^1,u^2)$ is fully determined by the function $\chi\in\ker\hat a$.
In the oscillator basis $\varphi_{n\ell}$, the condition $\hat a\chi=0$ means that the quantum number $n=0$. Hence, the general normalized solution can be represented as
\begin{equation}
	\chi(u^1,u^2)
	=
	\sum_{\ell=0}^{\infty}
	c_\ell\,\varphi_{0\ell}(u^1,u^2),
	\qquad
	\sum_{\ell=0}^{\infty}|c_\ell|^2=1.
	\label{eq7.11}
\end{equation}

Since the operators $\hat b,\hat b^\dagger$ commute with $\hat a$, they preserve the subspace $\ker\hat a$. Therefore, the condition $\hat a\chi=0$ does not fix the quantum number $\ell$. In the next subsection, we select from these solutions the states that are eigenstates of $\hat b$.

\subsection{Coherent states of the second pair of creation and annihilation operators}\label{subsec:singular-coherent}

Let us introduce the following definition.

\begin{definition}\label{def:singular-coherent}
For $s=+1$ and $\lambda=G_N$, a \emph{coherent state} is a solution in Eq.~\eqref{eq7.10} that is an eigenfunction of the operator $\hat b$:
\begin{equation}
\hat a\Psi_z(u)=0,
\qquad
\hat b\Psi_z(u)=z\Psi_z(u),
\qquad z\in\mathbb C.
\label{eq7.12}
\end{equation}
\end{definition}

According to Eq.~\eqref{eq7.10}, the function \(\chi = \chi_z\) satisfies Eq.~\eqref{eq7.13},
\begin{equation}
\hat a\chi_z=0,
\qquad
\hat b\chi_z=z\chi_z.
\label{eq7.13}
\end{equation}
Integrating the system in Eq.~\eqref{eq7.13}, we obtain
\begin{align}
	&\chi_z(u^1,u^2) =
	\exp\left(-\frac{|z|^2}{2}+\mathrm{i}z\bar w\right)\varphi_{00}(u^1,u^2),
	\quad
	\bar w=\sqrt{\frac{B}{2}}\left(u^1-\mathrm{i}u^2\right), \label{eq7.14}\\
	&\int_{\mathbb R^2} |\chi_z|^2\,\mathrm{d}u^1\mathrm{d}u^2=1,\quad
	 \left|\langle\chi_z,\chi_{z'}\rangle\right| = \exp\left(-\frac{|z-z'|^2}{2}\right), \label{eq7.15}\\
	&\int_{\mathbb C}\frac{d^2z}{\pi}|\chi_z\rangle\langle\chi_z| =
	\widehat P_{\ker\hat a}.\label{eq7.16}
\end{align}
where \(\varphi_{00}\) is the vacuum state of the operators \(\hat a\) and \(\hat b\), defined in Eq.~\eqref{eq3.23}, and $\widehat P_{\ker\hat a}$ is the projector onto the subspace $\ker\hat a$.

The corresponding solution of the original Dirac Eq.~\eqref{eq2.10} has the form,
\begin{equation}
	\Psi_{z}(u)
	=
	\frac{1}{\sqrt{4\pi\sqrt2}}
	\exp\left[
	-\frac{\mathrm{i}G_N}{2}(u^0+u^3)
	\right]
	\begin{pmatrix}
		0\\
		1\\
		0\\
		0
	\end{pmatrix}
	\chi_z(u^1,u^2).
	\label{eq7.17}
\end{equation}
The family in Eq.~\eqref{eq7.17} is not complete in the space of solutions of the original Dirac Eq.~\eqref{eq2.10}, since the spectral parameter is fixed by \(\lambda=G_N\). Each solution in this family contains a single Fourier mode in \(u^3\) and is not localized in the longitudinal direction.

Since one has that
\begin{equation}
	u^0+u^3=\sqrt2\,x^0,
	\qquad
	\exp\left[
	-\frac{\mathrm{i}G_N}{2}(u^0+u^3)
	\right]
	=
	\exp\left(
	-\frac{\mathrm{i}G_N}{\sqrt2}x^0
	\right),
	\label{eq7.18}
\end{equation}
the solutions in Eq.~\eqref{eq7.17} are independent of the longitudinal coordinate $x^3$. Moreover,
\begin{equation}
	\mathrm{i}\partial_{x^0}\Psi_{z}
	=
	\frac{G_N}{\sqrt2}\Psi_{z},
	\qquad
	p^3=0.
	\label{eq7.19}
\end{equation}
Thus, the states in Eq.~\eqref{eq7.17} are stationary states with energy $E=G_N/\sqrt2$.

For the solutions in Eq.~\eqref{eq7.17}, we have $\langle\hat a\rangle=0$ and $\langle\hat b\rangle=z$. Hence, the inverse relations in Eq.~\eqref{eq3.21} for $s=+1$ give
\begin{equation}
	\begin{aligned}
		\langle u^1\rangle
		&=
		-\sqrt{\frac{2}{B}}\,\mathrm{Im}\,z,
		\qquad
		\langle u^2\rangle
		=
		\sqrt{\frac{2}{B}}\,\mathrm{Re}\,z,
		\\
		(\Delta u^1)^2
		&=
		(\Delta u^2)^2
		=
		\frac{1}{B}.
	\end{aligned}
	\label{eq7.20}
\end{equation}
The mean coordinates and their variances in Eq.~\eqref{eq7.20} are independent of $u^0$. Thus, the center of the coherent state in the plane $(u^1,u^2)$ remains at rest, and its spatial width is constant.

For $s=-1$, $\beta=-B$, the operators in Eqs.~\eqref{eq7.2}--\eqref{eq7.3} are expressed in terms of the creation and annihilation operators in the reverse order:
\begin{equation}
	\begin{aligned}
		&+\sqrt2
		\left(
		\partial_{u^2}+\mathrm{i}\partial_{u^1}
		\right)
		+G_N\Omega u^2
		=
		-2\sqrt B\,\hat a^\dagger,
		\\
		&-\sqrt2
		\left(
		\partial_{u^2}-\mathrm{i}\partial_{u^1}
		\right)
		+G_N\Omega u^2
		=
		-2\sqrt B\,\hat a.
	\end{aligned}
	\label{eq7.21}
\end{equation}
Hence, for $\lambda=G_N$, the system in Eqs.~\eqref{eq7.2}--\eqref{eq7.3} takes the form,
\begin{subequations}
	\label{eq7.22}
	\begin{align}
		&\hat a^\dagger\psi_L=0,
		\label{eq7.22a}
		\\
		&-2\sqrt B\,\hat a\phi_L
		+
		\left(
		2\mathrm{i}\partial_{u^0}-G_N
		\right)\psi_L
		=
		0.
		\label{eq7.22b}
	\end{align}
\end{subequations}
The operator $\hat a^\dagger$ has no nontrivial kernel in $L^2(\mathbb R^2)$. Therefore,
\begin{equation}
	\psi_L=0,
	\qquad
	\hat a\phi_L=0.
	\label{eq7.23}
\end{equation}

Thus, the general solution of the active subsystem with finite norm in the plane $(u^1,u^2)$ has the form
\begin{equation}
	\begin{aligned}
		\Psi(u)
		&=
		\exp\left(
		-\frac{\mathrm{i}G_N}{2}u^3
		\right)
		\begin{pmatrix}
			1\\
			0\\
			0\\
			0
		\end{pmatrix} \phi_L(u^0,u^1,u^2),
		\\
		\phi_L(u^0,u^1,u^2)
		&=
		\sum_{\ell=0}^{\infty}
		c_\ell(u^0)\,
		\varphi_{0\ell}(u^1,u^2),
		\quad
		\sum_{\ell=0}^{\infty}|c_\ell(u^0)|^2<\infty,\quad
		\hat a\varphi_{0\ell}(u^1,u^2)=0.
	\end{aligned}
	\label{eq7.24}
\end{equation}
The coefficients $c_\ell(u^0)$ are arbitrary functions of $u^0$, since, after substituting $\psi_L=0$, the system in Eq.~\eqref{eq7.22} contains no equation determining the dependence of $\phi_L$ on $u^0$.

As in the case $s=+1$, one can formally select coherent solutions associated with the second pair of creation and annihilation operators in the subspace $\ker\hat a$. Let us set
\begin{equation}
	\chi_z^{(-)}(u^1,u^2)
	=
	\exp\left(
	-\frac{|z|^2}{2}
	+\mathrm{i}z\bar w
	\right)
	\varphi_{00}(u^1,u^2),
	\qquad
	\bar w
	=
	\sqrt{\frac{B}{2}}
	\left(
	u^1+\mathrm{i}u^2
	\right).
	\label{eq7.25}
\end{equation}
Then
\begin{equation}
	\hat a\chi_z^{(-)}=0,
	\qquad
	\hat b\chi_z^{(-)}=z\chi_z^{(-)},
	\qquad
	\int_{\mathbb R^2}
	|\chi_z^{(-)}|^2\,
	\mathrm{d}u^1\mathrm{d}u^2
	=
	1.
	\label{eq7.26}
\end{equation}
The corresponding subfamily of solutions in Eq.~\eqref{eq7.24} is obtained for
\begin{equation}
	\phi_{L,z}(u^0,u^1,u^2)
	=
	C_z(u^0)\,
	\chi_z^{(-)}(u^1,u^2),
	\label{eq7.27}
\end{equation}
where $C_z(u^0)$ is an arbitrary function.

For the solutions in Eq.~\eqref{eq7.24}, we have $\Psi_{(-)}=0$. Hence, the definition of the scalar product in Eq.~\eqref{eq2.29} directly implies
\begin{equation}
	(\Psi,\Psi)_{u^0}=0.
	\label{eq7.28}
\end{equation}
Thus, all nonzero solutions in Eq.~\eqref{eq7.24}, including the coherent subfamily in Eq.~\eqref{eq7.27}, have zero norm with respect to the scalar product in Eq.~\eqref{eq2.29}.

Hence, the solutions in Eq.~\eqref{eq7.24} belong to the null space of the positive-semidefinite form in Eq.~\eqref{eq2.29}. Upon taking the quotient of the solution space by this null space, they represent the zero vector and do not define physical active neutrino states on the hypersurface $u^0=\mathrm{const}$.

Thus, for $\lambda=G_N$, the cases $s=+1$ and $s=-1$ differ substantially. For $s=+1$, the component $\psi_L\in\ker\hat a$ is nonzero, and normalized coherent states associated with the second pair of creation and annihilation operators exist. For $s=-1$, the condition $\ker\hat a^\dagger=\{0\}$ implies $\psi_L=0$, whereas the nonzero component $\phi_L\in\ker\hat a$ does not contribute to the norm in Eq.~\eqref{eq2.29}. The corresponding coherent subfamily consists of formal coherent solutions in the null space rather than physical states.

\section{The parameters \texorpdfstring{$\lambda$}{lambda} and \texorpdfstring{$m_{\mathrm{eff}}$}{meff} for neutrinos in NS}\label{sec:lambda-neutron-star}

In this section, we determine the possible values of the spectral parameter $\lambda$ and estimate the parameter $m_{\mathrm{eff}}$ for the coherent solutions constructed above, given the physical parameters of the neutrino and neutron star matter.

We consider ordinary coherent states, $\rho_0=0$. We also choose the coordinate orientation so that $\Omega>0$. For $G_N>0$, this corresponds to $s=+1$. The case $s=-1$ is not considered in this section.

Since a coherent state with $z_1\neq0$ is not stationary, we use the mean physical energy below. We first determine the relation between the mean energy and the spectral parameter $\lambda$. We then estimate $G_N$, $B$, and $m_{\mathrm{eff}}$ for matter with a density of the order of nuclear matter density.

For numerical estimates, we consider a small region where space can be regarded as flat, and the matter parameters and velocity field vary slowly on the scale of the coherent wave packet.

\subsection{Mean physical energy and possible values of \texorpdfstring{$\lambda$}{lambda}}\label{subsec:mean-energy-lambda}

For a solution with fixed $\lambda$, we have $2\widehat{\widetilde p}_3\Psi=\lambda\Psi$. The transformations in Eq.~\eqref{eq2.4} imply
\begin{equation}
\hat p_0
=\frac{1}{\sqrt2}\left(
\mathrm{i}\partial_{u^0}+\mathrm{i}\partial_{u^3}
\right),
\qquad
\hat p^{\,3}
=\frac{1}{\sqrt2}\left(
\mathrm{i}\partial_{u^0}-\mathrm{i}\partial_{u^3}
\right).
\label{eq8.1}
\end{equation}
Thus, for a coherent solution normalized in the plane $(u^1,u^2)$ and delta-normalized with respect to $\lambda$, the mean values below are defined at fixed $\lambda$ using the normalized transverse function: the common longitudinal delta-normalization factor in the numerator and denominator cancels. Let us define
\begin{equation}
\overline E\equiv\langle\hat p_0\rangle
=\frac{1}{\sqrt2}\left(\langle\hat H\rangle+\frac{\lambda}{2}\right),
\qquad
\overline p^{\,3}\equiv\langle\hat p^{\,3}\rangle
=\frac{1}{\sqrt2}\left(\langle\hat H\rangle-\frac{\lambda}{2}\right).
\label{eq8.2}
\end{equation}
From Eq.~\eqref{eq8.2}, we obtain the exact relation,
\begin{equation}
\lambda
=\sqrt2\left(\overline E-\overline p^{\,3}\right).
\label{eq8.3}
\end{equation}

To express $\overline E$ in terms of the coherent-state parameters, we use the inverse transformation in Eq.~\eqref{eq4.17}. The first equation  in Eq.~\eqref{eq4.1} directly implies
\begin{equation}
\overline n
\equiv\left\langle\hat a^\dagger\hat a\right\rangle
=|z_1|^2.
\label{eq8.4}
\end{equation}
For $s=+1$, Eq.~\eqref{eq3.17} gives
\begin{equation}
\langle\hat H\rangle
=\frac{G_N}{2}+\omega_a\overline n,
\qquad
\omega_a=\frac{2B}{\lambda-G_N}.
\label{eq8.5}
\end{equation}
Substituting Eq.~\eqref{eq8.5} into Eq.~\eqref{eq8.2}, we obtain the relation between the mean physical energy and the spectral parameter $\lambda$ for $\lambda\neq G_N$:
\begin{equation}
Q = \sqrt2\,\overline E-G_N = 
\frac{\lambda-G_N}{2}
+\frac{2B\overline n}{\lambda-G_N}.
\label{eq8.6}
\end{equation}
If $\overline n>0$, multiplying Eq.~\eqref{eq8.6} by $\lambda-G_N$, leads to the quadratic Eq.~\eqref{eq8.7}
\begin{equation}
(\lambda-G_N)^2
-2Q(\lambda-G_N)
+4B\overline n=0.
\label{eq8.7}
\end{equation}
Hence, under the condition,
\begin{equation}
	Q^2\geq4B\overline n,
	\label{eq8.8}
\end{equation}
Eq.~\eqref{eq8.7} has real roots
\begin{equation}
\lambda_\pm
=
G_N+Q
\pm
\sqrt{Q^2-4B\overline n},
\label{eq8.9}
\end{equation}
For $Q^2>4B\overline n$, the roots in Eq.~\eqref{eq8.9} are distinct, whereas for $Q^2=4B\overline n$ they coincide, giving a single double root.

For $Q\neq0$, the physical meaning of the two roots is determined by their behavior as $B\to0$. Let us denote
\begin{equation}
\begin{aligned}
\lambda_{\mathrm{reg}}
&=G_N+Q+\operatorname{sgn}(Q)
\sqrt{Q^2-4B\overline n}\rightarrow G_N+2Q,\quad B\to 0, \\
\lambda_{\mathrm{near}}
&=G_N+Q-\operatorname{sgn}(Q)
\sqrt{Q^2-4B\overline n}\longrightarrow G_N,\quad B\to 0.
\end{aligned}
\label{eq8.10}
\end{equation}
As the vorticity vanishes, the root $\lambda_{\mathrm{reg}}$ continuously approaches the regular solution with $\lambda\neq G_N$. As $B\to0$, the root $\lambda_{\mathrm{near}}$ approaches the exceptional value $\lambda=G_N$ in the spectral parameter. This statement concerns only the spectral parameter: the convergence of the corresponding coherent solutions to the degenerate solutions of Sec.~\ref{sec:singular-lambda} requires a separate study of the singular limit $\lambda-G_N=O(B)$.

According to Eq.~\eqref{eq8.3}, the two roots correspond to different mean longitudinal momenta
$\overline p^{\,3}_{\mathrm{reg,near}}
=\overline E-\lambda_{\mathrm{reg,near}}/\sqrt2$. Thus, at the same mean energy, the roots describe different longitudinal kinematic regimes.

Let us separately consider $\overline n=0$. If $Q\neq0$, the original equality in Eq.~\eqref{eq8.6} for $\lambda\neq G_N$ gives the unique regular value
\begin{equation}
\lambda=G_N+2Q=2\sqrt2\,\overline E-G_N,
\qquad \overline n=0,\quad Q\neq0.
\label{eq8.11}
\end{equation}
If $\overline n=0$ and $Q=0$, no regular solution with $\lambda\neq G_N$ exists: Eq.~\eqref{eq8.11} gives $\lambda=G_N$. The point $Q=0$, i.e., $\overline E=G_N/\sqrt2$, for $\overline n=0$ belongs to the degenerate case $\lambda=G_N$ studied in Sec.~\ref{sec:singular-lambda}.

For $Q\neq0$ and $Q^2\gg4B\overline n$, the roots in Eq.~\eqref{eq8.10} have the form,
\begin{equation}
\begin{aligned}
\lambda_{\mathrm{reg}}
&=G_N+2Q
-\frac{2B\overline n}{Q}
+O\!\left(\frac{B^2\overline n^2}{|Q|^3}\right),\\
\lambda_{\mathrm{near}}
&=G_N
+\frac{2B\overline n}{Q}
+O\!\left(\frac{B^2\overline n^2}{|Q|^3}\right).
\end{aligned}
\label{eq8.12}
\end{equation}
Thus, at the same mean energy and under the strict inequality $Q^2>4B\overline n$, the regular equation for $\lambda\neq G_N$ gives two distinct values of the spectral parameter. For the astrophysical estimate, we use $\lambda=\lambda_{\mathrm{reg}}$, which remains separated from the exceptional value $G_N$ and tends, as $B\to0$, to the regular solution of the problem with zero vorticity.

The condition in Eq.~\eqref{eq8.8} for real roots can be expressed directly in terms of the mean physical energy. Since $Q=\sqrt2\,\overline E-G_N$, for $\overline n>0$ it is equivalent to
\begin{equation}
	\left|
	\overline E-\frac{G_N}{\sqrt2}
	\right|
	\geq
	\sqrt{2B\overline n}.
	\label{eq8.13}
\end{equation}
Hence, for given $B$ and $\overline n>0$, two distinct real values of the spectral parameter, $\lambda_{\mathrm{reg}}$ and $\lambda_{\mathrm{near}}$, exist for $|\overline E-G_N/\sqrt2|>\sqrt{2B\overline n}$. At equality, the roots coincide. Consider the interval
\begin{equation}
	\left|
	\overline E-\frac{G_N}{\sqrt2}
	\right|
	<
	\sqrt{2B\overline n}.
	\label{eq8.14}
\end{equation}
Within this interval, the discriminant of Eq.~\eqref{eq8.7} is negative; hence, the regular Eq.~\eqref{eq8.6} for $\lambda\neq G_N$ has no real roots in $\lambda$. For $\overline n>0$, the value $\overline E=G_N/\sqrt2$ does not belong to the regular family. The same numerical value of the energy is implemented by a separate family of normalized degenerate solutions for $s=+1$ and $\lambda=G_N$, considered in Sec.~\ref{sec:singular-lambda}; for these solutions, $\hat a\chi=0$ and, hence, $\overline n=0$.

\subsection{Estimate of \texorpdfstring{$m_{\mathrm{eff}}$}{meff} for neutrinos in NS}\label{subsec:lambda-neutron-star-estimate}

To get the estimate of the effective neutrino mass, let us choose the following parameters of dense neutron matter: neutron density $n_n\sim10^{38}\,\mathrm{cm}^{-3}$, the corresponding weak potential $G_N/\sqrt2\simeq7\,\mathrm{eV}$, a local velocity gradient $\Omega\simeq10^3\,\mathrm{s}^{-1}$, and stellar radius $R_*=10\,\mathrm{km}$. Then
\begin{equation}
\begin{aligned}
&n_n\sim10^{38}\,\mathrm{cm}^{-3},
\qquad
G_N\simeq9.90\,\mathrm{eV},\\
&\Omega\simeq6.58\times10^{-13}\,\mathrm{eV},
\qquad
R_*\simeq10\,\mathrm{km}.
\end{aligned}
\label{eq8.15}
\end{equation}
For these parameters,
\begin{equation}
B=\frac{G_N\Omega}{\sqrt2}
\simeq4.61\times10^{-12}\,\mathrm{eV}^2.
\label{eq8.16}
\end{equation}

Consider a coherent state with $\overline n=|z_1|^2=1$. For $\overline n=1$, the right-hand side of Eq.~\eqref{eq8.13} with the chosen parameters is approximately $3.0\times10^{-6}\,\mathrm{eV}$. Thus, two distinct real values of the spectral parameter, $\lambda_{\mathrm{reg}}$ and $\lambda_{\mathrm{near}}$, exist for $|\overline E-G_N/\sqrt2|>3.0\times10^{-6}\,\mathrm{eV}$. At equality, these parameters coincide.

For the neutron star parameters adopted above and $\overline n=1$, the
condition for real solutions of Eq.~\eqref{eq8.7} becomes
\begin{equation}
	\left|
	\overline E-\frac{G_N}{\sqrt2}
	\right|
	\geq
	3.0\times10^{-6}\,\mathrm{eV}.
\end{equation}
Hence, within the narrow interval
\begin{equation}
	\left|
	\overline E-\frac{G_N}{\sqrt2}
	\right|
	<
	3.0\times10^{-6}\,\mathrm{eV},
\end{equation}
the discriminant of Eq.~\eqref{eq8.7} is negative and the regular family
with $\lambda\neq G_N$ has no real value of the spectral parameter
$\lambda$.

The central value $\overline E=G_N/\sqrt2$ coincides with the energy of the normalized degenerate solutions for $s=+1$ and $\lambda=G_N$. However, these solutions have the quantum number $\overline n=0$ and belong to a separate family rather than the family with $\overline n=1$ considered here.

The characteristic oscillator length and the radius of motion of the wave packet center about the center specified by $z_2$ are
\begin{equation}
\ell_B=\frac{1}{\sqrt B}\simeq9.2\,\mathrm{cm},
\qquad
R_\perp=\sqrt{\frac{2\overline n}{B}}\simeq13\,\mathrm{cm}.
\label{eq8.19}
\end{equation}
Thus, this packet is localized on a scale much smaller than the NS radius. This condition controls the packet size rather than its position. We additionally assume that the parameter $z_2$ places the packet center within the local region under consideration and that, for a chosen $\kappa>0$, the condition in Eq.~\eqref{eq4.21} holds,
$|\Omega|\bigl(|\langle u^2\rangle|+\kappa\Delta u^2\bigr)<1$.
The length scales over which the matter density and velocity gradient vary must also be much greater than $\ell_B$ and $R_\perp$.

The effective mass of the classical model, $m_{\mathrm{eff}}=(\lambda-G_N)/2$, satisfies the equation 
\begin{equation}
	m_{\mathrm{eff}}^2-Qm_{\mathrm{eff}}+B\overline n=0.
	\label{eq8.20}
\end{equation} 
obtained from Eq.~\eqref{eq8.7}. Using the classification in Eq.~\eqref{eq8.10}, let us define
\begin{equation}
\begin{aligned}
m_{\mathrm{eff}}^{\mathrm{reg}}
&=\frac12\left[Q+\operatorname{sgn}(Q)
\sqrt{Q^2-4B\overline n}\right],\\
m_{\mathrm{eff}}^{\mathrm{near}}
&=\frac12\left[Q-\operatorname{sgn}(Q)
\sqrt{Q^2-4B\overline n}\right].
\end{aligned}
\label{eq8.21}
\end{equation}
We use the first root for the astrophysical estimate. In this case, it remains finite and does not approach the point $\lambda=G_N$ at $B\to0$.

As a specific low-energy state, let us take $\overline E=5\,\mathrm{eV}$. Then
\begin{equation}
Q=\sqrt2\,\overline E-G_N
\simeq-2.828\,\mathrm{eV},
\qquad
\frac{B\overline n}{Q^2}\simeq5.76\times10^{-13}.
\label{eq8.22}
\end{equation}
For ${B\overline n}/{Q^2}\ll1$, Eq.~\eqref{eq8.21} gives the following expansion for the regular root:
\[
m_{\mathrm{eff}}^{\mathrm{reg}}
=
Q-\frac{B\overline n}{Q}
+O\!\left(\frac{B^2\overline n^2}{|Q|^3}\right).
\]
Hence, the relative change of the regular root compared with its value at \(B=0\), $m_{\mathrm{eff}}^{\mathrm{reg}}\big|_{B=0}=Q$, is of order
\[
\frac{\left|m_{\mathrm{eff}}^{\mathrm{reg}}-Q\right|}{|Q|}
\simeq
\frac{B\overline n}{Q^2}
\simeq5.76\times10^{-13}.
\]
Numerically,
\begin{equation}
\begin{aligned}
m_{\mathrm{eff}}^{\mathrm{reg}}
&\simeq-2.828\,\mathrm{eV},
&\lambda_{\mathrm{reg}}&\simeq4.243\,\mathrm{eV},\\
m_{\mathrm{eff}}^{\mathrm{near}}
&\simeq-1.63\times10^{-12}\,\mathrm{eV},
&\lambda_{\mathrm{near}}&\simeq G_N.
\end{aligned}
\label{eq8.23}
\end{equation}

Typical values of the regular root $m_{\mathrm{eff}}^{\mathrm{reg}}$ at mean energies of the order of a few electronvolts are
\begin{equation}
\begin{aligned}
\overline E=5\,\mathrm{eV}:&\quad
m_{\mathrm{eff}}^{\mathrm{reg}}\simeq-2.83\,\mathrm{eV},\\
\overline E=6\,\mathrm{eV}:&\quad
m_{\mathrm{eff}}^{\mathrm{reg}}\simeq-1.41\,\mathrm{eV},\\
\overline E=8\,\mathrm{eV}:&\quad
m_{\mathrm{eff}}^{\mathrm{reg}}\simeq+1.41\,\mathrm{eV},\\
\overline E=10\,\mathrm{eV}:&\quad
m_{\mathrm{eff}}^{\mathrm{reg}}\simeq+4.24\,\mathrm{eV}.
\end{aligned}
\label{eq8.24}
\end{equation}
Thus, for the regular root at energies of the order of a few electronvolts, \(|m_{\mathrm{eff}}|\) is also of the order of a few electronvolts, except in the energy region near \(G_N/\sqrt2\), where \(Q\) becomes small.

Let us check the sensitivity of the regular root to a substantial increase in $\overline n$. To this end, we formally set the radius of motion of the wave packet center equal to the stellar scale, $R_\perp=10\,\mathrm{km}$. Using Eq.~\eqref{eq8.19} with the value of $B$ in Eq.~\eqref{eq8.16}, we obtain
\begin{equation}
\overline n\simeq5.9\times10^9,
\qquad
m_{\mathrm{eff}}^{\mathrm{reg}}(\overline E=5\,\mathrm{eV})
\simeq-2.82\,\mathrm{eV},
\qquad
m_{\mathrm{eff}}^{\mathrm{near}}
\simeq-9.7\times10^{-3}\,\mathrm{eV}.
\label{eq8.25}
\end{equation}
Even with this formal substitution, the regular root $m_{\mathrm{eff}}^{\mathrm{reg}}$ remains of the order of a few electronvolts.

Thus, for the chosen NS parameters, the regular root $m_{\mathrm{eff}}^{\mathrm{reg}}$ for low-energy coherent states is of the order of a few electronvolts. Small values $|m_{\mathrm{eff}}|\ll1\,\mathrm{eV}$ correspond to the other root, $m_{\mathrm{eff}}^{\mathrm{near}}$, which tends to zero as $B\to0$, so that $\lambda\to G_N$.

\begin{figure}[t]
	\centering
	\includegraphics[width=0.82\textwidth]{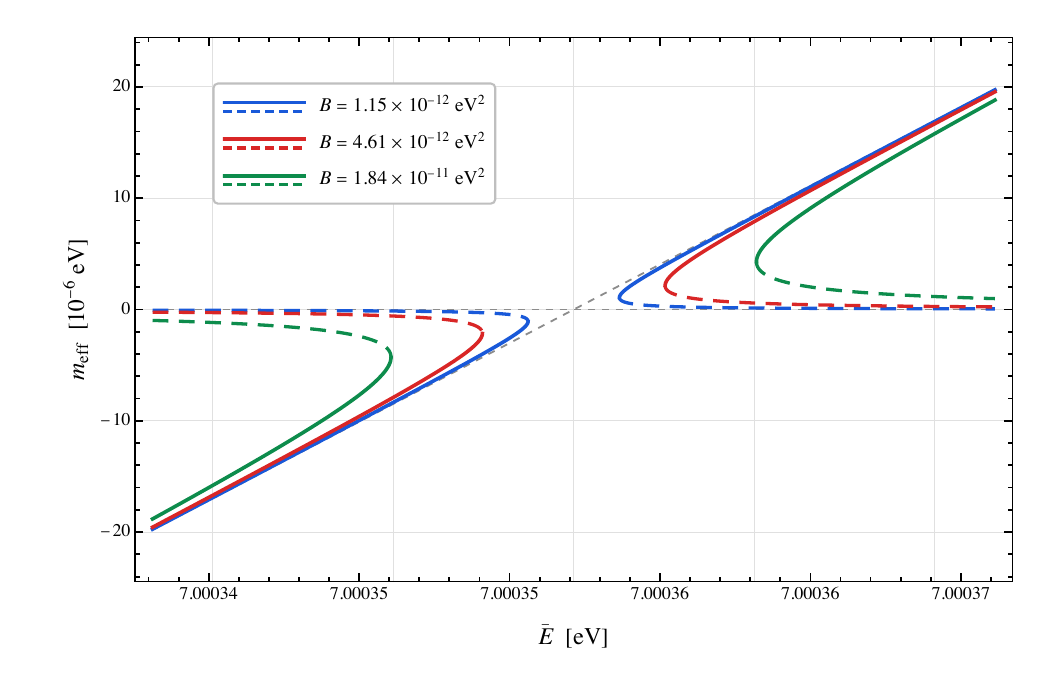}
	\caption{
		Dependence of the effective neutrino mass $m_{\mathrm{eff}}$ on the mean
		neutrino energy $\overline E$ for $G_N=9.90\,\mathrm{eV}$ and
		$\overline n=1$. Solid curves show
		$m_{\mathrm{eff}}^{\mathrm{reg}}$, while dashed curves represent
		$m_{\mathrm{eff}}^{\mathrm{near}}$. Different colors indicate different
		values of $B$. The gray dashed lines show the asymptotic values
		$m_{\mathrm{eff}}=\sqrt{2}\,\overline E-G_N$ and
		$m_{\mathrm{eff}}=0$.
	}
	\label{fig:meff-energy}
\end{figure}

To illustrate the dependence of the parameter $m_{\mathrm{eff}}=(\lambda-G_N)/2$ on the average energy $\overline E$, in Fig.~\ref{fig:meff-energy}, we show both $m_{\mathrm{eff}}^{\mathrm{reg}}$ and $m_{\mathrm{eff}}^{\mathrm{near}}$ for $G_N=9.90\,\mathrm{eV}$, at $\bar{n}=1$ and different 
values
of
the parameter $B = G_N \Omega / \sqrt{2}$.

The solid lines correspond to the regular root $m_{\mathrm{eff}}^{\mathrm{reg}}$, whereas the dashed ones represent the root $m_{\mathrm{eff}}^{\mathrm{near}}$, for which $\lambda\to G_N$ as $B\to 0$. Both branches have central symmetry with respect to the point $\left(G_N/\sqrt2,0\right)$.
For all considered values
of
$B$,
the regular branches approach the same asymptotic behavior  
$m_{\mathrm{eff}} \to \sqrt{2}\,\bar{E}-G_N$, while the parameter $m_{\mathrm{eff}}^{\mathrm{near}}$ approaches to zero: $m_{\mathrm{eff}} \to 0$.

While $B$ increases, the interval
\begin{equation*}
	\left|
	\overline E-\frac{G_N}{\sqrt2}
	\right|
	<
	\sqrt{2B\overline n},
\end{equation*}
where Eq.~\eqref{eq8.20} for $m_{\mathrm{eff}}$ has no real roots, becomes broader. 

Meanwhile, the parameter $m_{\mathrm{eff}}^{\mathrm{reg}}(\overline E)$ is practically
independent of $B$ far from this narrow region, whereas $m_{\mathrm{eff}}^{\mathrm{near}}$
rapidly approaches to zero. Thus, the impact of the matter rotation on the relation between
$\overline E$ and $m_{\mathrm{eff}}$ is the most sizable in a narrow
neighborhood of the value $\overline E=G_N/\sqrt2$, corresponding to
the special region $\lambda\approx G_N$.

\section{Conclusion}\label{sec:CONCL}

In this work we constructed the coherent states of neutrinos electroweakly interacting with a rotating background matter. Neutrinos are taken to be Dirac particles. The interaction with mater is considered in the forward scattering approximation. In this situation, we incorporate the matter effects into the external axial-vector field, which depends on spatial coordinates.

While constructing coherent states, we work in frames of
the relativistic single particle quantum mechanics. Therefore, a neutrino
is described by a $c$-number bispinor wave function obeying
the Dirac equation. This formulation of the problem differs from the pioneering work by Glauber in Ref.~\cite{Gla63}, where coherent states of a quantized electromagnetic field are defined in the bosonic Fock space as eigenstates of the annihilation operators.

For the matter velocity profile considered in our work,
the reduction the Dirac equation to the light-cone variables leads to
a reduced Hamiltonian that can be written in terms of the creation and annihilation operators of a harmonic oscillator. It allows us to construct
coherent neutrino states as the special solutions of the single-particle
Dirac equation. A similar approach was used in Refs.~\cite{BagrovBuchbinderGitman1976,GitmanBagrov2005, BagrovGavrilovGitmanMeira2011} while studying coherent states of relativistic particles in frames of the single-particle relativistic quantum mechanics.

In Sec.~\ref{sec:lf-reduction}, we have considered the Dirac equation for a neutrino interacting with matter having a coordinate dependent effective potential. We have rewritten this equation in the light cone variables. Using the projected equations~\eqref{eq2.14}--\eqref{eq2.15}, we have reduced the Dirac equation to a differential Eq.~\eqref{eq2.16} for a single projection, with the rest of the components being expressed through this projection. 

Then, we have approximated the rotating matter by the velocity field in Eq.~\eqref{eq2.30}. We have derived the two component Schr\"odinger Eq.~\eqref{eq2.32} which is valid for massive neutrinos in rotating matter. This equation couples active and sterile neutrinos through the mass term.

In Sec.~\ref{sec:oscillator-representation}, we consider massless particles. In this situation, the wave equations for active and sterile neutrinos decouple, as shown in Eq.~\eqref{eq3.8}. In the massless case, the effective Hamiltonian was shown in Eq.~\eqref{eq3.17} to be equivalent to a harmonic oscillator. We have derived its eigenfunctions and eigenvalues, which are used to built the coherent states.

In Sec.~\ref{sec:coherent-states}, we have constructed the generalized coherent states in case when $\lambda\neq G_N$. We have defined them as the eigenfunctions of the operators $\hat A(u^0)$ and $\hat b$. 
Then, In Eqs.~\eqref{eq4.18}-\eqref{eq4.22}, we have computed the mean values of the momentum and coordinates, as well as we have derived the Ehrenfest Eqs.~\eqref{eq4.23}, which determine the neutrino wave packet evolution.

In Sec.~\ref{sec:classical-limits}, we have provided the classical physics interpretation of the coherent states obtained. In particular, we have defined the effective neutrino mass owing to the electroweak interaction of massless neutrinos with a rotating matter. Then, we have also considered the situation of a slow rotation. In this case, the neutrino evolution is described by a plane wave solution. Moreover, the center of a neutrino wave packet moves along a classical trajectory.

In Sec.~\ref{sec:coherent-classification}, we have derived the unsqueezed  coherent states, which are defined as the neutrino coherent states. This kind a coherent states are characterized by reaching the lower bound in the Heisenberg uncertainty relation. We have also represented this coherent states in terms of bispinors. We have constructed the creation and annihilation operators for these coherent states in Eqs.~\eqref{eq6.16}--\eqref{eq6.20}.

We have considered the special case $\lambda=G_N$ in Sec.~\ref{sec:singular-lambda}. We have obtained the values of the observables in this case, as well as we have indicated which neutrino states can be implemented physically.

Finally, in Sec.~\ref{sec:lambda-neutron-star}, we have applied our results for the description of astrophysical neutrinos emitted by a rotating NS. In particular, we have evaluated the effective neutrino mass, introduced in Sec.~\ref{sec:classical-limits}, for different neutrino energies. It turns out to be in a few eV range for realistic values of NS parameters.

\appendix
\renewcommand{\theequation}{\thesection.\arabic{equation}}

\section{Proof of Lemma~\ref{lem:projected-reduction}}\label{app-A}

In this appendix, we prove Lemma~\ref{lem:projected-reduction}. We first write the Dirac Eq.~\eqref{eq2.10} in the form,
\begin{equation}
	\left(
	\widetilde\gamma^0\widehat{\widetilde p}_0
	+\widetilde\gamma^3\widehat{\widetilde p}_3
	+\widetilde\gamma^j\widehat{\widetilde p}_j
	+U-m_\nu
	\right)
	\left(\Psi_{(+)}+\Psi_{(-)}\right)=0,
	\qquad j=1,2.
	\label{eqA.1}
\end{equation}
We use the following identities:
\begin{equation}
	\begin{aligned}
		&\widehat P_{(+)}\widetilde\gamma^0=\widetilde\gamma^0,
		\qquad
		\widehat P_{(+)}\widetilde\gamma^3=0,
		\qquad
		\widehat P_{(-)}\widetilde\gamma^0=0,
		\qquad
		\widehat P_{(-)}\widetilde\gamma^3=\widetilde\gamma^3,\\
		&[\widehat P_{(\pm)},\widetilde\gamma^j]=0,
		\qquad
		\widetilde\gamma^0\Psi_{(+)}=0,
		\qquad
		\widetilde\gamma^3\Psi_{(-)}=0.
	\end{aligned}
	\label{eqA.2}
\end{equation}
Applying $\widehat P_{(+)}$ to Eq.~\eqref{eqA.1} from the left, we obtain that
\begin{equation}
	\widetilde\gamma^0\widehat{\widetilde p}_0\Psi_{(-)}
	+\left(\widetilde\gamma^j\widehat{\widetilde p}_j-m_\nu\right)\Psi_{(+)}
	+\widehat P_{(+)}U\Psi_{(-)}
	+\widehat P_{(+)}U\Psi_{(+)}=0.
\end{equation}
Let us multiply this equality from the left by $\widetilde\gamma^3/2$. Using
\begin{equation}
	\frac12\widetilde\gamma^3\widetilde\gamma^0=\widehat P_{(-)},
	\qquad
	\widehat P_{(-)}\Psi_{(-)}=\Psi_{(-)},
	\qquad
	\frac12\widetilde\gamma^3\widetilde\gamma^0
	\widehat{\widetilde p}_0\Psi_{(-)}=\widehat{\widetilde p}_0\Psi_{(-)},
\end{equation}
we obtain Eq.~\eqref{eq2.14}.

We now apply the projector $\widehat P_{(-)}$ to the original Eq.~\eqref{eqA.1}:
\begin{equation}
	\widetilde\gamma^3\widehat{\widetilde p}_3\Psi_{(+)}
	+\left(\widetilde\gamma^j\widehat{\widetilde p}_j-m_\nu\right)\Psi_{(-)}
	+\widehat P_{(-)}U\Psi_{(+)}
	+\widehat P_{(-)}U\Psi_{(-)}=0.
\end{equation}
Multiplication from the left by $\widetilde\gamma^0$ gives
\begin{equation}
	\widetilde\gamma^0\widetilde\gamma^3
	\widehat{\widetilde p}_3\Psi_{(+)}
	=2\widehat P_{(+)}\widehat{\widetilde p}_3\Psi_{(+)}
	=2\widehat{\widetilde p}_3\Psi_{(+)}.
\end{equation}
This implies the second projected Eq.~\eqref{eq2.15} and Eq.~\eqref{eq2.17} for $\Psi_{(+)}$. Substituting Eq.~\eqref{eq2.17} into Eq.~\eqref{eq2.14} gives the closed Eq.~\eqref{eq2.16} for $\Psi_{(-)}$. Conversely, let $\Psi_{(-)}$ satisfy Eq.~\eqref{eq2.16}, and let $\Psi_{(+)}$ be defined by Eq.~\eqref{eq2.17}. Then the second projected equation holds by construction, and the first coincides with Eq.~\eqref{eq2.16}. This proves Lemma~\ref{lem:projected-reduction}.

\end{document}